\let\newfloat\newfloat@ltx
\newcommand*{\SU}[1]{\mathbb{s}\mathbb{u}(#1)}
\newcommand{\choi}{\mathbf{\Phi}}
\renewcommand{\geq}{\geqslant}
\renewcommand{\leq}{\leqslant}
\DeclareMathOperator*{\argmax}{arg\,max}
\DeclareMathOperator*{\model}{\mathbf{M}_{\pmb{\zeta}}}
\newcommand{\bs}{\textsf{BS}}
\def\be{\begin{equation}}
\def\ee{\end{equation}}
\def\bs{\begin{split}}
\def\e{\end{split}}
\def\ba{\begin{eqnarray}}
\def\bea{\begin{eqnarray}}
\def\tea{\end{eqnarray}}
\def\ea{\end{eqnarray}}
\def\eea{\end{eqnarray}}
\def\eye{\mathds{1}}
\def\su{\mathfrak{s}\mathfrak{u}}
\def\H{\tilde{H}}
\def\su{\mathfrak{s}\mathfrak{u}}
\def\SU{\text{SU}}
\DeclareMathOperator{\s}{\mathbf{s}}
\DeclareMathOperator{\rew}{r}
\DeclareMathOperator{\action}{\mathbf{a}}
\DeclareMathOperator{\nntheta}{\mathbf{\theta}}
\newtheorem{theorem}{Theorem}
\newtheorem{lemma}{Lemma}
\newtheorem{proposition}{Proposition}
\newenvironment{specialproof}{\textit{Proof:}}{\hfill$\square$}
\begin{document}

\title{Sample-efficient Model-based Reinforcement Learning for Quantum Control}

\author{Irtaza Khalid}
\email{khalidmi@cardiff.ac.uk}
\affiliation{School of Computer Science and Informatics, Cardiff University, Cardiff, CF24 4AG, UK}

\author{Carrie A.\ Weidner}
\email{c.weidner@bristol.ac.uk}
\affiliation{Quantum Engineering Technology Laboratories, H.\ H.\ Wills Physics Laboratory and Department of Electrical and Electronic Engineering, University of Bristol, Bristol BS8 1FD, UK}

\author{Edmond A.\ Jonckheere}
\email{jonckhee@usc.edu}
\affiliation{Department of Electrical and Computer Engineering, University of Southern California, Los Angeles, CA 90007, USA}

\author{Sophie G.\ Schirmer}
\email{lw1660@gmail.com}
\affiliation{Department of Physics, Swansea University, Swansea, SA2 8PP, UK}

\author{Frank C.\ Langbein}
\email{frank@langbein.org}
\affiliation{School of Computer Science and Informatics, Cardiff University, Cardiff, CF24 4AG, UK}

\begin{abstract}
We propose a model-based reinforcement learning (RL) approach for noisy time-dependent gate optimization with reduced sample complexity over model-free RL. Sample complexity is defined as the number of controller interactions with the physical system. Leveraging an inductive bias, inspired by recent advances in neural ordinary differential equations (ODEs), we use an auto-differentiable ODE, parametrized by a learnable Hamiltonian ansatz, to represent the model approximating the environment, whose time-dependent part, including the control, is fully known. Control alongside Hamiltonian learning of continuous time-independent parameters is addressed through interactions with the system. We demonstrate an order of magnitude advantage in sample complexity of our method over standard model-free RL in preparing some standard unitary gates with closed and open system dynamics, in realistic computational experiments incorporating single shot measurements, arbitrary Hilbert space truncations, and uncertainty in Hamiltonian parameters. Also, the learned Hamiltonian can be leveraged by existing control methods like GRAPE for further gradient-based optimization with the controllers found by RL as initializations. Our algorithm, which we apply to nitrogen vacancy (NV) centers and transmons, is well suited for controlling partially characterized one- and two-qubit systems.
\end{abstract}

\maketitle

\section{Introduction}

Control of quantum devices for practical applications requires overcoming a unique set of challenges~\cite{koch_survey}. One is to find robust controls for noisy systems, where typical noise sources include control and feedback noise, system parameter mischaracterization, measurement and state preparation errors, decoherence and cross-talk~\cite{IBM_noises_model}. To achieve scalable, fault-tolerant quantum devices~\cite{gottesman, flammia_fault_tolerance, gambetta_fault_tolerance}, control algorithms must produce controls resilient to such noise. Reinforcement learning (RL) approaches appear more likely to find robust controls for certain applications~\cite{self1} at the cost of requiring a large number of measurements from the quantum device (samples). We propose a model-based RL approach to address this problem.

Typically, a quantum control problem is formulated as an open-loop optimization problem based on a model~\cite{GRAPE,krotov, sophie_grape, koch_survey}, which may be constructed \emph{ab initio} or obtained via a process tomography approach. During optimization there is no interaction between the physical system to be controlled and the control algorithm. The underlying assumption is that the model represents the system sufficiently accurately. This class of control algorithms has low sample complexity (high sample efficiency) represented by the number of optimization function calls until successful termination. The reason for this is, generally, that an analytical model, in particular gradient information, can be leveraged. This is a strong assumption, at least in the noisy intermediate scale quantum era, where noise impedes perfect characterization of quantum devices. However, the approach has merit, since significant thought goes into modelling and engineering quantum devices~\cite{c3}.

Alternatively, RL seeks an optimal control via interaction with the physical system, building models to various degrees. It successfully addresses challenging, noisy quantum control problems with the promise of inherent robustness~\cite{murphy, self1, self2, dalgaard2020global, phys_rev_single_shot_reward, bukov}. There are also gradient-free approaches~\cite{rabitz_neldermead} and methods that estimate gradients using variations of automatic differentiation~\cite{c3, schaefer_diff_schro, sde_autodiff, semiautodiff, schuster_lab_first_autodiff}.

RL approaches utilizing only measurements without prior information do not suffer from model bias. Moreover, they usually optimize an average controller performance over the noise in the system, yielding inherently robust controllers~\cite{self2}. However, this means the number of optimization function calls becomes prohibitively large, and RL's high sample complexity is a core problem limiting its practical applicability~\cite{barto_rlbook}. This is not surprising as without a prior model little or no information is available to the optimization algorithm and all information must be obtained via measurements.

Despite this inherent restriction, in recent times, RL has been deployed on real quantum devices for parametrized pulse-level gate optimization~\cite{baum2021}, improving the performance of quantum error correcting codes~\cite{sivak2023} and fluxonium gate parameter optimization~\cite{ding2023}. In line with forthcoming analysis, the sample complexity of these RL experiments is estimated to be around $10^4$, {$10^3$} and $10^4$, respectively, excluding the cost of estimating observables using single-shot measurements. These costs are smaller than direct or ab initio applications of RL, which consume around $10^6$ samples~\cite{barto_rlbook}, as the aforementioned works, to differing extent, exploit specific knowledge of the quantum system to frame the problem to be easier to optimize for the RL agent. More specifically, these works use custom RL adaptations for each problem, e.g., fine-tuning solutions already found by other optimization algorithms as the final step during control preparation in Ref.~\cite{ding2023}, or exploiting some experimental structure that simplifies finding optimal controls in Ref.~\cite{baum2021}. In the present paper, we remain generic in our ignorance of the system Hamiltonian during acquisition of optimal controls to demonstrate the general utility of our approach without inducing constraining (and potentially incorrect if not confidently known) biases on the learning problem. We note, however, that there is significant scope for sample-efficiency reductions. For example, the use of our model-based RL algorithm would make RL, in general, extensible to a wider class of quantum control experiments.

In classical RL, high sample complexity is typically addressed using model-based methods, which construct a model from scratch using information obtained from measurements. Such methods result in reduced sample complexity for benchmark problems~\cite{Dyna}. They are successful if the model and the measurements (samples) obtained during training possess some generalizability~\cite{PETS, MBPO} that is captured by a function approximator (usually a neural network). However, methods involving universal function approximation of dynamic trajectories are unstable. This is because learning can be hindered by the very large space of trajectories, and interpolating from insufficient sample trajectories can be shallow or incorrect~\cite{deepmind2018_when_to_use_mbrl}. More importantly, for quantum data, it is known that a time-independent Hamiltonian can generate many unitary propagators, so estimating the model may imply learning the entire Hilbert space of propagators for a particular control problem which is often intractable. This motivates learning the dynamical generator, i.e., the Hamiltonian, instead of the propagators.

In this paper, we propose a model-based RL method for time-dependent, noisy gate preparation where the model is given by an ordinary differential equation (ODE), differentiable with respect to model parameters~\cite{neural_ode}. ODE trajectories do not intersect~\cite{coddington1955theory, dupont_augmentednode, intrinsic_robustness_ode}, which constrains the space of potential models for learning and makes learning robust to noise. We parameterize the Hamiltonian by known time-dependent controls and unknown time-independent (system) parameters, which, in addition, makes the model interpretable.

We show that combining the inductive bias from this ODE model with partially correct knowledge (assuming the controls are known but not the time-independent system Hamiltonian) reduces the sample complexity compared to model-free RL by at least an order of magnitude.

It has recently been shown that inductive biases, i.e., encoding the symmetries of the problem into the architecture of the model space, such as the translation equivariance of images in the convolution operation~\cite{geometric_deep_learning}, leads to stronger out-of-distribution generalization by the learned model. This is because inductive biases impose strong priors on the space of models such that training involves exploring a smaller subset of the space to find an approximately correct model.

We demonstrate improvement over the sample-efficient soft-actor critic (SAC) model-free RL algorithm~\cite{haarnoja2018soft} for performing noisy gate control in leading quantum computing architectures: nitrogen vacancy (NV) centers (one and two qubits)~\cite{nv_center}, and transmons (two qubits)~\cite{effective_ham_gambetta}, subject to dissipation and single-shot measurement noise. We also show that the learned Hamiltonian can be leveraged to optimize the controllers found by our RL method further using GRAPE~\cite{GRAPE, sophie_grape}.

Our approach is similar in spirit to Ref.~\cite{clouatre2022mpc} where a novel Hamiltonian learning protocol via quantum process tomography is proposed for the purpose of model-predictive control. The complete Hamiltonian (including the control and system parts) is identified term by term via a Zero-Order Hold (ZOH) method, where only one term is turned on at a time, e.g., by setting the control parameters to zero, and learned individually using optimization over the Stiefel manifold. As a side remark, a sample complexity advantage between learning the Hamiltonian with quantum control than without it has recently been shown~\cite{dutkiewicz2023advantage}. The learned Hamiltonian is then used to obtain a viable control sequence for a variety of state and gate preparation problems for closed (unitary) systems under the influence of initial state preparation errors. While it is possible for our Hamiltonian learning protocol to also learn the full Hamiltonian using the ZOH method, we focus on the problem of improving the sample complexity of RL in this paper through the incorporation of a partially known physics-inspired model. Furthermore, our focus is also directed on the interplay of concurrently learning the model and controlling the system in noisy closed and open system settings.

This paper is organized as follows: in Sec.~\ref{sec:control_prob} we define the open and closed system control problems including our setup to simulate single-shot measurements and the RL control framework; Sec.~\ref{sec:mbrl_control_setup} describes the model-based version of the RL control framework and Sec.~\ref{sec:results} presents numerical studies for some realistic example control problems on the system architectures described above in noisy and ideal settings and how to leverage the learned system Hamiltonian using GRAPE.

\section{The Quantum Control Problem}\label{sec:control_prob}

We briefly introduce the quantum control problem for open and closed quantum systems and describe how we estimate the propagators from measurements, needed for our RL approach.

\subsection{Closed System Dynamics}\label{ssec:closed-system-dynamics}

Consider a quantum system that is represented by an effective Hamiltonian $H(t)$ in the space of complex Hermitian $n\times n$ matrices
\begin{equation}
  H(\mathbf{u}(t), t) = H_0 + H_c(\mathbf{u}(t), t),
  \label{eq:general_hamiltonian}
\end{equation}
where $H_0$ is the time-independent system Hamiltonian and $H_c$ is the control Hamiltonian parametrized by time-dependent controls $\mathbf{u}(t)$. Its closed-system dynamics are governed by the Schr\"odinger equation,
\begin{equation}
  \dv{U(\mathbf{u}(t), t)}{t} = -\frac{i}{\hbar}H(\mathbf{u}(t), t)U(\mathbf{u}(t), t), \quad U(t=0) = \eye,
  \label{eq:unitary_ode}
\end{equation}
where $U(\mathbf{u}(t), t)$ is the unitary propagator representing the state evolution. Its fidelity to realize a target gate $U_\text{target}$ is
\begin{equation}
  F(U_\text{target}, U(\mathbf{u}(t), t)) = \frac{1}{n^2}\left|\Tr[U_\text{target}^\dagger U(\mathbf{u}(t), t)]\right|^2.
  \label{eq:fidelity}
\end{equation}
The control problem to implement $U_\text{target}$ is
\begin{equation}
  \mathbf{u}^*(t^*) = \argmax_{\mathbf{u}(t),~t \leq T} F(U_\text{target}, U(\mathbf{u}(t), t)),
  \label{eq:control_problem_standard}
\end{equation}
where $\mathbf{u}^*(t^*)$ are the optimized control parameters for an optimized final time $t^* \leq T$.

\subsection{Open System Dynamics}\label{ssec:open_system_dynamics}

For open system dynamics consider an arbitrary state with density matrix $\rho$ for $\log_d{n}$ qudits evolving according to the master equation~\cite{open_quantum_systems_textbook,floether2012robust}
\begin{equation}
  \dv{\rho(t)}{t} = -\frac{i}{\hbar}[H(\mathbf{u}(t), t), \rho] + \mathfrak{L}(\rho(t)),
  \label{eq:master_eqn}
\end{equation}
where $\mathfrak{L}(t)$ describes the Markovian decoherence and dephasing dynamics (i.e., the environment),
\begin{equation}
   \mathfrak{L}(\rho(t)) = \sum_{d}{\gamma_{d}\left(l_{d} \rho l_{d}^\dagger - \frac{1}{2} \{ l_{d}^\dagger l_{d}, \rho\}\right)},
    \label{eq:Lindblad_operators}
\end{equation}
and $l_{d}$ is a decoherence operator that can be non-unitary.

To characterize the gate implemented by $\mathbf{u}(t)$, we need to consider the evolution of a complete orthonormal basis of states, $\{\rho_k\}_{k=1}^{n^2}$. For this we introduce the Liouville superoperator matrix $\mathbf{X}$ that acts on an arbitrary vectorized state $\boldsymbol\rho$ (e.g., obtained by stacking the matrix columns) to produce the evolution
\begin{equation}
  \boldsymbol\rho(t) = \mathbf{X}(t)\boldsymbol\rho(t=0).
  \label{eq:matrix_liouville_superoperator}
\end{equation}
This is equivalent to the tensor-matrix evolution~\cite{wood2011tensor}
\begin{equation}
  \rho(t)_{mn} = \sum_{\mu, \nu} X_{nm, \nu \mu}(t) \rho_{\mu \nu}(t=0).
  \label{eq:tensor_liouville_superoperator}
\end{equation}
$X_{nm, \nu \mu}(t)$ is a fourth order tensor (used to refer to multi-dimensional arrays in this context) form of $\mathbf{X}(t)$ that encodes the evolution of the state element $\rho_{\mu \nu}$.

Thus, similar to Eq.~\eqref{eq:unitary_ode}, we define a superoperator $X(\mathbf{u}(t), t)$ which encodes the evolution of $\{\rho_k\}_{k=1}^{n^2}$ and follows the linear ODE
\begin{equation}
  \dv{\mathbf{X}(\mathbf{u}(t), t)}{t} = -\frac{i}{\hbar}(\mathbf{L}_0 + i\mathbf{L}_1)\mathbf{X}(\mathbf{u}(t), t), \quad \mathbf{X}(t=0) = \eye
  \label{eq:superoperator_ode}
\end{equation}
where $\mathbf{L}_0, \mathbf{L}_1$ represent the superoperator version of the commutator map $[H(\mathbf{u}(t), t), \cdot]$ and $\mathfrak{L}(\cdot)$ the Markovian decoherence and dephasing dynamics.

We factorize out an imaginary prefactor $i$ to the left in Eq.~\eqref{eq:superoperator_ode} to unify the ODE for open and closed system dynamics. For $\mathfrak{L} \equiv \mathbf{0}$, the above reduces to the closed system dynamics of Eq.~\eqref{eq:unitary_ode}. For open dynamics, to be faithful to experimental limitations, we implement single-shot noise when estimating the gate, i.e., process tomography. We transform the superoperator $X_{nm, \nu \mu}$ to the Choi matrix $\Phi/\Tr[\Phi]$ that is given by index reshuffling or partial transpose (and more formally a contravariant-covariant change of coordinates)~\cite{wood2011tensor, lichnerowicz2016elements},
\begin{equation}
  \Phi_{n m,\mu \nu } = X_{\nu m,\mu n}.
  \label{eq:super_to_choi}
\end{equation}

In Sec.~\ref{sec:results}, we use this for open and closed dynamics. Estimating $\Phi$ is possible using ancilla-assisted quantum process tomography (AAPT) and the Choi-Jamiolkowski isomorphism~\cite{choi, jam, ancilla_assisted_quantum_process_tomography} for $2\log_d{n}$-qudit states and $\log_d{n}$-qudit gates. Analogously to the above, $\Phi$ has a matrix version $\choi$. In this paper, we decompose $\choi$ over a generalized $\su(n^2)$'s algebra basis $\{P_k\}_{k=1}^{n^4-1}$, e.g., Gell-Mann matrices~\cite{gellmann},
\begin{equation}
  \frac{\choi}{\Tr[\choi]} = \frac{\mathds{1}}{n^2}+ \sum_{k=2}^{n^4-1} q_k P_k
\end{equation}
whose coefficients are
\begin{equation}
  q_k = \frac{\Tr[P_k \choi]}{\Tr[\choi]} \in [-1,1].
  \label{eq:binomial_gate_measurements}
\end{equation}
$q_{k}$ can be modelled as a binomial random variable $\text{Bin}(M, p_k)$ with probability $p_{k} = \frac{1}{2}(1+q_{k})$ where $M$ is the number of single-shot (Bernoulli) measurements~\cite{optimal_control_with_poor_statistics}.
{
The Gell-Mann matrices are a generalization of the Pauli matrices and the corresponding physical measurement operations are akin to measuring qudit energy levels in an informationally complete basis.}

We measure the faithfulness of the implemented gate $\choi(\mathbf{u}(t), t)$ w.r.t. the target gate (as another Choi state) $\choi_\text{target}$ using the generalized state-fidelity~\cite{flammia_direct_fidelity_2011},
\begin{align}
  F(\choi(\mathbf{u}(t), t), \choi_\text{target})
  &= \frac{\Tr[\choi(\mathbf{u}(t), t)\choi_\text{target}]}{\Tr[\choi(\mathbf{u}(t), t)]\Tr[\choi_{\text{target}}]} \nonumber\\
  &= \frac{1}{n^4} + \sum_{k=2}^{n^4-1} q_k^\text{target} q_k.
  \label{eq:generalised_state_fidelity}
\end{align}
Analogously to the closed case, the open control problem is to find an optimal control $\mathbf{u}^*(t^*)$ for an optimal final time $t^* \leq T$ (with $T$ being the fixed upper bound), such that
\begin{equation}
  \mathbf{u}^*(t^*) = \argmax_{\mathbf{u}(t),~t \leq T} F(\choi(\mathbf{u}(t), t), \choi_\text{target}).
  \label{eq:noisy_gate_control_problem}
\end{equation}

\subsection{Discretization}\label{ssec:discretization}

The exact solution of the time-dependent general dynamics discussed in Eq.~\eqref{eq:noisy_gate_control_problem} is given by the time-ordered operator
\[
\mathbf{E}(t^*, \mathbf{u}^*(t^*)) = \mathcal{T}\exp\left(\int_{0}^{t^*}dt'\;-\frac{i}{\hbar}\mathbf{G}(t', \mathbf{u}^*(t'))\right)
\]
for a unitary or Lindbladian generator $\mathbf{G}$. In practice, we solve for a piece-wise constant version of the dynamics represented by $N$ fixed steps of $\Delta t = T/N$ of the final time $T$. Thus, $\mathbf{E}(\mathbf{u}(t), t)$ is discretized, which amounts to fixing $\mathbf{u}(t)=\mathbf{u}_m$ to be constant for each timestep such that $\mathbf{u}_m \in \mathbb{C}^{m \times C}$ is a finite dimensional array where $C$ is the number of controls per timestep in the vector $u_l$ parametrizing $H_c(u_l, t_l)$ and $m$ is the number of total timesteps in the pulse, with $m \leq N$ for a maximum number of pulse segments $N$. The propagator is
\begin{equation}
  \mathbf{E}(t, \mathbf{u}(t)) := \mathbf{E}(\mathbf{u}_m) = \prod_{l=1}^{m} \exp(-\frac{i}{\hbar}\Delta t \mathbf{G}(t_l, \mathbf{u}(t_l))).
\end{equation}
The control problems in Eqs.~\eqref{eq:control_problem_standard} and~\eqref{eq:noisy_gate_control_problem} are equivalent to
\begin{equation}
  \mathbf{u}^*_m = \argmax_{\mathbf{u}_m=[u_1, \dots, u_m] \in \mathbb{X}, m \leq N} \mathcal{F}(\choi(\mathbf{E}(\mathbf{u}_m)), \choi(\mathbf{E}_\text{target}))
  \label{eq:discretized_control_problem}
\end{equation}
for a fidelity $\mathcal{F}$ and the time. $\mathbf{u}_m$ is constrained to some maximum and minimum values given by $\mathbb{X} = \{\mathbf{u}_m: \forall c,l \; u_\text{min} \leq u_{cl} \leq u_\text{max} \in \mathbb{C}\}$. The constraints are applied separately to the real and imaginary parts of the components of $\mathbf{u}_m$.

\section{Model-based Reinforcement Learning Control}\label{sec:mbrl_control_setup}

We give a brief overview of RL, followed by explaining our model-based RL approach. An excellent introduction can be found in Ref.~\cite{barto_rlbook}.

\subsection{Reinforcement Learning for Quantum Control}

The RL problem is usually treated as a sequential Markov decision problem (MDP) on the space of states, actions, transition probabilities and rewards: $(\mathcal{S}, \mathcal{A}, \mathcal{P}, \mathcal{R})$. This describes an environment for consecutive one-step transitions, indexed by $k=1,2,\dotsc$, from current state $\s_k \in \mathcal{S}$ to next state $\s_{k+1} \in \mathcal{S}$ if an RL agent executes action $\action_k \in \mathcal{A}$, yielding immediate scalar reward $\rew_{k} \in \mathcal{R}$. The environment is generally probabilistic, so $\mathcal{P}(\s_{k+1} | \s_k, \action_k)$ is the probability that the agent is in state $\s_{k+1}$ after executing $\action_k$ in state $\s_k$. An RL agent follows a policy function that is represented by a conditional probability distribution $\pi(\action_k|\s_k)$: the probability of taking action $\action_k$ after observing the state $\s_k$.

The quantum control problem can be represented as an RL problem by sequentially constructing the control amplitudes as actions, using the unitary propagator the control implements as the state with the reward as the fidelity:
\begin{subequations}
  \begin{align}
    \action_k &= u_{k},\\
    \s_k &= \prod_{l=1}^{k} \exp(-\frac{i}{\hbar}\Delta t \mathbf{G}(t_l, u_l)),\\
    \rew_k
    &= \mathcal{F}(\choi(\mathbf{E}(\mathbf{u}_k)), \choi(\mathbf{E}_\text{target}))
    .
  \end{align}
  \label{eq:qc_problem_for_rl}
\end{subequations}
As this is deterministic the probabilities $\mathcal{P}$ are trivial, and we have a simple environment function $\mathcal{E}: \mathcal{S} \times \mathcal{A} \rightarrow \mathcal{S} \times \mathcal{R}$, mapping the current state and action $(s,a)$ to the next state and reward $(s',r)$. In model-free RL (see Algorithm~\ref{algorithm:rl_loop}), a discounted sum of expected rewards, called the returns,
\begin{align}
  \eta(\pi) := \mathds{E}_{\action_t \sim \pi}\left[\sum_{k=0}^{\infty} \gamma^k \rew_k \right]
\end{align}
is maximized, where $\mathds{E}_{x \sim P}[\cdot] = \int_{\mathcal{X}}dx\;P(x)[\cdot]$ is the expectation operator and $0 \leq \gamma \leq 1$ is a discount factor.

\begin{algorithm}[t]
  \SetEndCharOfAlgoLine{}
  Initialize empty dataset $\mathcal{D}$, parametrized random policy $\pi_\theta$, $k \gets 0$ \;
  Observe initial state $s_0$\;
  \While{$k < T/\Delta t$}{
    Execute $\action_k \gets \pi_{\nntheta}\left( \cdot | \s_k\right)$ \;
    Observe $\s_{k+1}$, $\rew_k \gets \mathcal{E}(\s_k, \action_k)$ \;
    Store $\mathcal{D} \gets \mathcal{D} \cup \{(\s_{k}, \s_{k+1}, \action_{k}, \rew_{k})\}$ \;
    $k \gets k + 1$ \;
  }
  \tcp*[l]{if require update: perform model-free update of parameters (e.g. policy $\pi_{\nntheta}$ )}
  \caption{Reinforcement learning loop}\label{algorithm:rl_loop}
\end{algorithm}

\begin{figure*}[t]
\subfloat[\textbf{Model-based RL}\label{fig:mb_rl}]
  {\begin{minipage}[t]{0.55\linewidth}
    \begin{tikzpicture}
      \draw[line width=0.01mm, black, dashed, rotate=25] (0,-4) ellipse (4.cm and 2.2cm) node {};
      \node[black] at (1.7,-1.8) {model-free RL};
      \node[white] at (0,-3.7) {\includegraphics[width=3em]{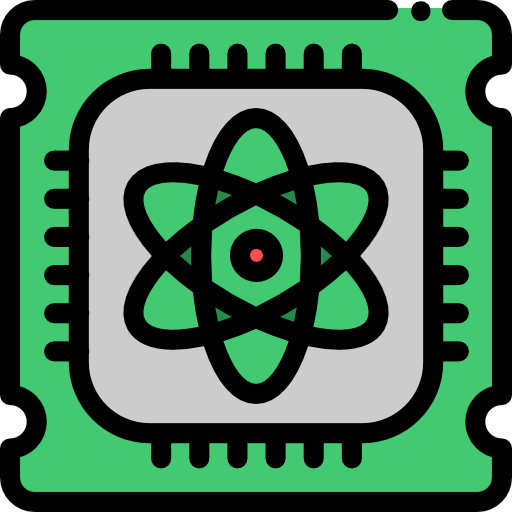}};
      \draw[blue,thick,dashed] (0,-4) circle (1cm) node {};
      \node[black] at (0,-4.4) {$\mathcal{E}(\s_k, \action_k)$};
      \node[red] at (0,-4.7) {\tiny environment};
      \draw[blue,thick, double,<->] (1,-4) -- (3,-3);
      \draw[blue,thick, double,->] (1,-4) -- (3,-6);
      \draw[blue,thick, double,<->] (4,-4) -- (4,-5);
      \node[blue] at (0.3, -5.5) {{\tiny learn from samples $\{\s_k,\s_{k+1},\action_k,\rew_k\}$}};
      \node[blue] at (1., -2.75) {\tiny interact (evolve MDP)};
      \node[blue] at (1.4, -2.95) {\tiny generates data for $\mathcal{D}_{\mathcal{E}}$};
      \node[blue] (rollout) at (5.2, -4.5) {\tiny $b$-step model rollout};
      \node[blue] (rollout) at (5.5, -4.8) {\tiny generates data for $\mathcal{D}_{\model}$};
      \node[white] at (4,-2.6) {\includegraphics[width=3.5em]{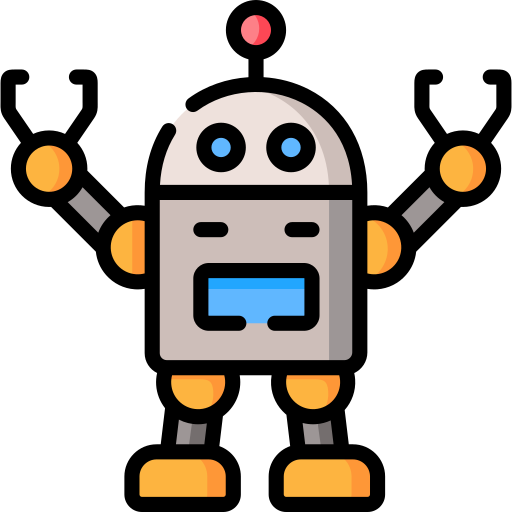}};
      \node[blue] at (4,-3.4) {};
      \draw[blue,thick,dashed] (4,-3) circle (1cm)  node {};
      \node[red] at (4, -1.9) {\tiny algorithm};
      \node[black] at (4,-3.3) {\tiny $\pi_\theta(\action_k | \s_k)$};
      \node[black] at (4,-3.6) {\tiny $Q_\phi(\s_k,\action_k)$};
      \draw[blue,thick,dashed] (4,-6) circle (1cm) node {};
      \node[black] at (4,-6.2) {\tiny $\model(\s_{k}, \action_k)$};
      \node[red] at (4,-6.5) {\tiny model};
      \node[white] at (4,-5.6) {\includegraphics[width=3em]{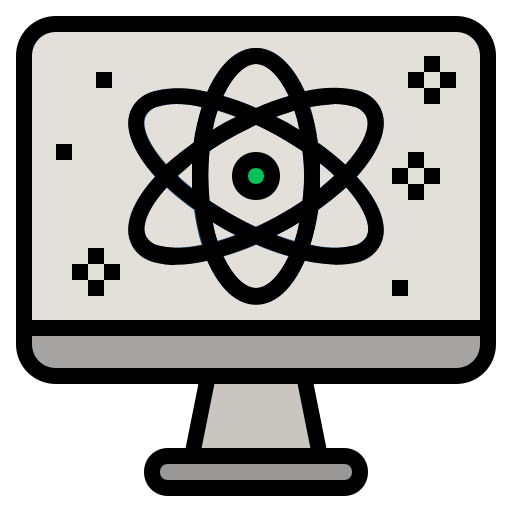}};
    \end{tikzpicture}
   \end{minipage}}
  \subfloat[\textbf{Policy function $\pi_\theta(\action_k | \s_k)$}\label{fig:policy_function_nn}]{\includegraphics[width=0.42\linewidth]{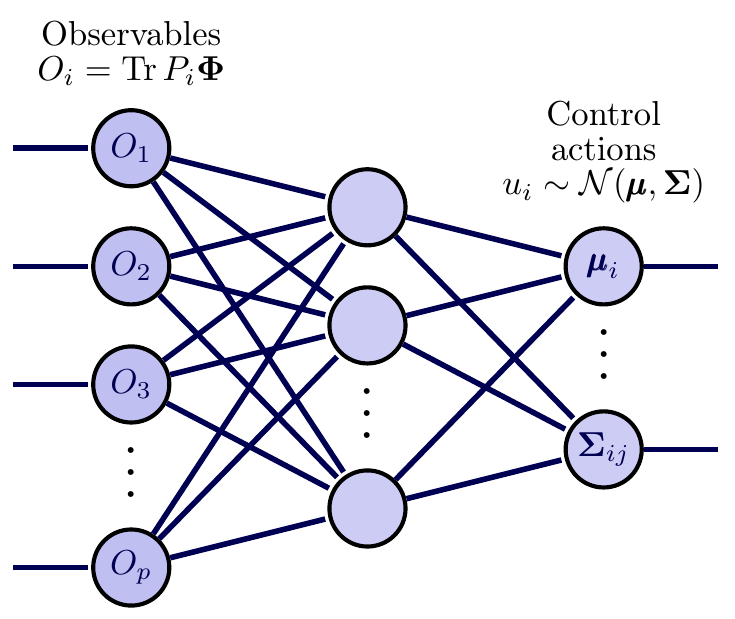}}
\caption{A schematic of model-based RL is given in~(a). The arrow-head implies direction of affect of the edge between a source and a sink node. The agent or policy function $\pi_\theta$ interacts with the RL environment modelled as MDP to collect data $\{\s_k,\s_{k+1},\action_k,\rew_k\}$. This encompasses model-free RL. The data is then used to train the model $\model(\s_{k}, \action_k)$. The model is trained until some quality measure like the validation prediction error on some untrained-upon data from the environment plateaus indicating that the training is complete. Then, it is used to generate synthetic data through a $b$-step rollout in which the policy interacts with the model $b$ times. The policy parameters $\theta$ (and the state-action value function parameters $\phi$) are optimized using the real and model generated data. In~(b), we visualize the policy inputs as the gate-characterizing observables  (unitary or Lindblad) about the Choi matrix $\choi$ given by Eq.~\eqref{eq:binomial_gate_measurements} and the tunable outputs are the parameters of a multivariate Gaussian distribution, i.e., the mean $\pmb{\mu}$ and covariance $\mathbf{\Sigma}$. The controls $u_i$ are drawn from $\mathcal{N}(\pmb{\mu},\pmb{\Sigma})$.}\label{fig:algo_schematic}
\end{figure*}

However, Refs.~\cite{haarnoja2018soft,ziebart2008maximum} observe that adding an entropy maximizing term for the policy $\pi(\action_k | \s_k)$ to the optimization objective encourages exploration of the state space $\mathcal{S}$, improves the learning rate of the agent and reduces the relative number of samples needed, compared to other standard RL algorithms. The maximum entropy objective
{
or the entropy-regularized cumulative reward function $J$
}
for $N$ steps is
\begin{equation}
  J(\pi)
  = \sum_{k=0}^N \gamma^k \mathds{E}_{(\s_k, \rew_k) \sim \mathcal{E}_\pi}\left[ r_k + \alpha J_1(\s_k) \right]
  \label{eq:rl_objective}
\end{equation}
where $\mathcal{E}_\pi$ represents the environment's state-action probability distribution induced by the policy $\pi$, $\alpha$ is an optimizable temperature parameter (signifying the importance of exploration in the objective), and $J_1(\s_k)$ is the entropy of the policy function $\pi(\cdot | \s_k)$ conditional on the $k$th state $\s_k$,
\begin{equation}
  J_1(\s_k) = -\mathds{E}_{x \sim \pi(\cdot|\s_k)}\left[\log(\pi(x|\s_k))\right].
  \label{eq:entropy}
\end{equation}
Thus, the RL control problem becomes a problem of finding the optimal control policy $\pi^*$ given by
\begin{equation}
  \pi^* = \argmax_\pi J(\pi).
  \label{eq:rl_opt_problem}
\end{equation}
This is exactly solvable for tabular MDPs using dynamic programming and heuristically with neural network function approximation for continuous MDPs.

\subsection{Model-Based Reinforcement Learning}

\begin{algorithm*}
  \SetEndCharOfAlgoLine{}
  \SetKwComment{Comment}{$\triangleright$ }{}
  \SetKwInOut{Input}{Input}
  \SetKwInOut{Output}{Output}
  \Input{\\\hspace{-3.6em}
    \begin{tabular}[t]{l @{\hspace{1em}} l}
      $H_c$ & control Hamiltonian (time-dependent part of $H(t)$ in Eq.~\eqref{eq:general_hamiltonian})\\
      $T, \Delta t, M$ &  max time, timestep size, number of single shot measurements (if open system to estimate $\pmb{\Phi}$ using Eq.~\eqref{eq:binomial_gate_measurements})\\
      $\mathbf{E}_\text{target}$ & target gate\\
      $W, C, b, \texttt{tol}$ & Epochs, timesteps, rollout length, validation loss tolerance (which is a problem-specific hyperparameter)
    \end{tabular}
  }
  \Output{\\\hspace{-3.6em}
    \begin{tabular}[t]{l @{\hspace{1em}} l}
      $\mathbf{u}^*$ & Approximately optimal 2D array of controls that solves Eq.~\eqref{eq:discretized_control_problem}\\
      $\nntheta,~\phi,~\pmb{\zeta}$ & Optimized parameters of the policy, critic and learned model\\
    \end{tabular}
  }
  Initialize empty environment dataset $\mathcal{D}_{\mathcal{E}}$, model dataset $\mathcal{D}_{\mathbf{M}_{\pmb{\zeta}}}$, random policy $\pi_\theta$ \;
  \tcp*[l]{collect random model training data}
  Populate $\mathcal{D}_{\mathcal{E}}$ using
  {
  uniform
  }
  random policy $\pi_{\nntheta}$ with Algorithm~\ref{algorithm:rl_loop} without updates \Comment*[r]{randomly explore the environment $\mathcal{E}$ state space}
  \For(){$W$ epochs}{
    \tcp*[l]{Train model}
    Sample a batch of training and validation data $D_\text{train}, D_\text{val} \sim \mathcal{D}_{\mathcal{E}}$ and minimize $L_\text{model}(D_\text{train})$ in Eq.~\eqref{eq:regression_loss} \;
    \For(){$C$ timesteps}{
      \tcp*[l]{agent-environment interaction}
      Execute $\action_k \gets \pi_{\nntheta}(\cdot | \s_k)$, observe $\s_{k+1}, \rew_k \gets \mathcal{E}(\s_k,\action_k)$ and store data $\mathcal{D}_{\mathcal{E}} \cup \{(\s_{k}, \s_{k+1}, \action_{k}, \rew_{k})\}$ \;
      \If{$L_\text{model}(D_\text{val}) < \texttt{tol}$}{
        \tcp*[l]{agent-model interaction}
        Sample uniformly a batch of initial states $\{\s_k\} \sim \mathcal{D}_{\mathcal{E}}$, $k \gets 0$\;
        \For(){$k'$ in $\{1, \cdots, b\}$}{
          \BlankLine
          Execute $\action_{k'} \gets \pi_{\nntheta}(\cdot | \s_{k'})$ and observe $\s_{k'+1}, \rew_{k'} \gets \mathbf{M}_{\pmb{\zeta}}(\s_{k'},\action_{k'})$ \Comment*[r]{$b$-length model rollout}
          Store $\mathcal{D}_{\model} \gets \mathcal{D}_{\model} \cup \{(\s_{k'}, \s_{k'+1}, \action_{k'}, \rew_{k'})\}$ \;
          $k' \gets k' + 1$,
        }
      }
      Train policy by minimizing $J'(\pi_{\nntheta})$ in Eq.~\eqref{eq:policy_improvement_step} using $\mathcal{D}_{\model} \cup \mathcal{D}_{\mathcal{E}}$
    }
  }
  \caption{Learnable Hamiltonian model-based soft actor critic (LH-MBSAC)}
  \label{algorithm:LH-MBSAC}
\end{algorithm*}

In this paper, we use the soft actor-critic (SAC) algorithm~\cite{haarnoja2018soft} as our base (model-free) RL algorithm. For brevity, we only highlight parts of SAC relevant to us. A detailed description can be found in the original paper~\cite{haarnoja2018soft}. We use a neural network policy function $\pi_\theta(\action_k | \s_k)$, with the optimizable parameters $\theta$, as the actor and the state-action value function $Q_\phi(\s_k,\action_k) = \mathds{E}_{(\s_k, \action_k) \sim \mathcal{E}_\pi}\left[\sum_{k=0}^\infty \gamma^k(\rew(\s_k,\action_k)+\alpha J_1(\s_k))\right]$ as the neural network critic with parameters $\phi$. Both $\pi$ and $Q$ are simple multilayer perceptrons. In essence, the critic is used to reduce the high variance in the reward function due to the non-stationary nature of the MDP. It is trained by having its predictions match the estimated $\hat{Q}$ values obtained for some data $\{\s_k,\s_{k+1},\action_k,\rew_k\}_{k=1}^b$ obtained from a $b$-length rollout (number of interactions) with $\mathcal{E}$. The actor is trained by minimizing the loss function
\begin{align}
  J'(\pi_\theta) = \mathds{E}_{(\s_k,\action_k) \sim \mathcal{E}_{\pi_\theta}}\left[ \alpha \log{\pi_\theta(\action_k|\s_k)} - Q_\phi(\s_k,\action_k)\right],
  \label{eq:policy_improvement_step}
\end{align}
which is equivalent to maximizing $J$ in Eq.~\eqref{eq:rl_objective}. For SAC, this policy optimization is carried out heuristically using neural networks to approximate the policy function $\pi_{\theta}$. We define the number of agent-environment interactions needed to find an approximately optimal policy $\pi^*$ as the \emph{sample complexity}. Moreover, the policy outputs parametrize the mean and covariance $\pmb{\mu}, \pmb{\Sigma}$ of a multivariate Gaussian $\mathcal{N}(\pmb{\mu},\pmb{\Sigma})$ from which the control vector $\mathbf{u}$ is drawn. For the quantum control problem in Eq.~\eqref{eq:discretized_control_problem}, we are usually just concerned with finding an optimal action sequence $\mathbf{u}^*$ producing the maximum intermediate reward $\pmb{r}_k$ rather than the optimal policy function $\pi^*$ which can be produced by a suboptimal policy, too.

SAC can be augmented to incorporate a model $\model(\s_k, \action_k)$ that approximates the dynamics of $\mathcal{E}(\s_k, \action_k)$ using the policy's interaction data $\mathcal{D}$~\cite{MBPO} where $\zeta$ are the model's learnable parameters. The model acts as a proxy for the environment and allows the policy to do MDP rollouts (steps) to augment the interaction data. For this to work, the dynamics obtained from interacting with $\model$ must be close enough to the true dynamics of $\mathcal{E}$ to allow the policy to maximize $J$. By improving the returns $\hat{\eta}(\pi)$ on the model $\model$ by at least a tolerance factor that depends on this dynamical modelling error, the policy's true returns $\eta(\pi)$ on the environment are guaranteed to improve (\cite{MBPO}, see App.~\ref{app:monotonic_returns_improvement} for a detailed mathematical discussion). See Fig.~\ref{fig:algo_schematic} for an illustration of model-based RL. A good choice of the model function class, therefore, can impose strong and beneficial constraints on the space of possible predicted dynamics and thus lead to a smaller modelling error and returns' tolerance factor or allow the model to reduce the tolerance factor greatly after consuming an appropriate amount of training data.

Our choice of the model's functional form is motivated by the two ideas presented in the introduction: (a) incorporating correct partial knowledge about the physical system in the model ansatz parameters; (b) encoding the problem's symmetries and structure into model predictions as function space constraints. For the system in Eq.~\eqref{eq:general_hamiltonian} we assume that the controls are partially characterized to address (a). Specifically, its time-dependent control structure $H_c$ is known. We achieve (b) by parametrizing the system Hamiltonian $H^{(L)}_0(\pmb{\zeta})$ with learnable parameters $\pmb{\zeta}$, where $L$ is the number of qubits. We make the model $\model$ a differentiable ODE whose generator is interpretable and has the form
\begin{align}
  H_{\pmb{\zeta}}(\mathbf{u}(t), t)
  &= H^{(L)}_0(\pmb{\zeta}) + H_c(\mathbf{u}(t), t)\nonumber \\
  &= \sum_{l=1}^{n^2}\zeta_lP_l +  H_c(\mathbf{u}(t), t)
  \label{eq:learnable_system_hamiltonian_pauli_param}
\end{align}
where $\zeta_l = \Tr[P_l H_0(t)] \in [-1,1]$ are real. Generally, like the Choi state, $H_0/\Tr[H_0]$ admits an arbitrary decomposition in terms of a basis $\{P_l\}_{l=1}^{n^2-1}$ of the $\SU{n}$'s Lie algebra. Analogously, for an open system, we parametrize the time-independent part of any dissipation dynamics in addition to the system Hamiltonian using an $\SU(n^2)$ algebra parametrization: $\mathbf{G}_0^{(L)}(\pmb{\zeta}^\text{diss})=\sum_l\zeta_l^\text{diss}P_l$ in the full generator $\mathbf{G}_{\pmb{\zeta}}$.

The model is trained by minimizing the regression loss for single timestep predictions using data uniformly sampled, $D \sim \mathcal{D}$, where $\mathcal{D}$ represents the entire dataset,
\begin{align}
  L_\text{model}(D) = \sum_{D}{\left(\mathbf{M}_{\pmb{\zeta}}\left(\s_k,\action_k\right) - \s_{k+1}\right)^2}.
  \label{eq:regression_loss}
\end{align}
To understand why a differentiable ODE ansatz is a good choice for the model, we need to define an ODE path that is given by $\phi_t: \mathbf{E}(0) \xrightarrow{H_{\pmb{\zeta}}} \mathbf{E}(T)$ generated by $H_{\pmb{\zeta}}$ for some time $t \in [0,T]$ and propagator $\mathbf{E}$. The ansatz is a good choice because of the following two properties of ODE paths: (a) they do not intersect and (b) if paths $\phi_0^{(A)}$, $\phi_0^{(B)}$ start close compared to path $\phi_0^{(C)}$, then paths $\phi_t^{(A)}$, $\phi_t^{(B)}$ remain close compared to path $\phi_t^{(C)}$.

Both properties are well known~\cite{younes_ode_book, howard1998gronwall} for ODEs and become very useful when we try to predict the trajectories from noisy quantum data by imposing strong priors on the space of learnable Hamiltonians. Property (b) is a consequence of Gronwall's inequality~\cite{howard1998gronwall} and essentially can be interpreted as: ODE flows that start off closer (w.r.t. the initial condition) stay closer (w.r.t. the final condition). Both (a) and (b) essentially imply a sort of intrinsic robustness of the ODE flow $\phi_t(\mathbf{z}_0)$ to perturbations on $\mathbf{z}_0$~\cite{intrinsic_robustness_ode}. They constrain the trajectories predicted by the model $\model$ to be intrinsically robust (over a finite time interval) to small noise in the states $\s_k$ and inaccuracies in the learned system Hamiltonian $H^{(L)}_0(\pmb{\zeta})$.

We call the SAC equipped with this differentiable ODE model the learnable Hamiltonian model-based SAC (LH-MBSAC) as listed in Algorithm~\ref{algorithm:LH-MBSAC}. Crucially, LH-MBSAC generalizes the SAC by allowing the policy to interact with the ODE model and the physical system. LH-MBSAC gracefully falls back to the model-free SAC in the absence of a model with low prediction error that is measured from the performance of the model's predictions on an unseen validation set of interaction data. The threshold or tolerance level for switching to the agent-model interaction part of the algorithm is likely problem-dependent and thus needs to be selected along with other hyperparameters in RL. However, this allows us to improve the sample complexity of model-free reinforcement learning, when possible, by leveraging knowledge about the controllable quantum system, yet we are still able to control the system in a model-free manner if this is not possible.

\section{Experiments}\label{sec:results}

We demonstrate the performance of LH-MBSAC on three quantum systems of current interest in open and closed settings with shot noise.
{
Measurements in this section are made using Pauli instead of the generalized Gell-Mann operators mentioned in Sec.~\ref{ssec:open_system_dynamics} and the simulated systems are all qubit systems.}

To warm up, the first system $\H_\text{NV}^{(1)}$ is a single-qubit NV center with microwave pulse control~\cite{nv_center_1q},
\begin{equation}
  \frac{H_\text{NV}^{(1)}(t)}{\hbar}
  = 2\pi \Delta \sigma_z + \underbrace{2\pi \Omega \left( u_1(t)\sigma_x + u_2(t)\sigma_y \right)}_{H_c(t)},
  \label{eq:nv_center_1q}
\end{equation}
where $\Delta=1\text{ MHz}$ is the microwave frequency detuning, $\Omega = 1.4\text{ MHz}$ is the Rabi frequency and the control field parameters are $u_j(t)$ in the range $\mathbb{X}_\text{NV}^{(1)} = \{-1 \leq u_{j} \leq 1\}$. In this and subsequent examples terms not covered by $H_c(t)$ are learned, parametrized by the learnable model parameters $\pmb{\zeta}$. The gate operation time is \SI{20}{\micro\second}.

The second system $H_\text{NV}^{(2)}$ is a two-qubit NV center system~\cite{nv_center}, driven by microwave pulses of approximately \SI{0.5}{MHz}, modelled as follows
\begin{align}
\frac{H_\text{NV}^{(2)}(t)}{\hbar}
  &= \ketbra{1}{1} \otimes \left( -\left(\nu_{z} + a_{zz} \right)\sigma_z - a_{zx} \sigma_x \right) \nonumber\\
  &+ \ketbra{0}{0} \otimes \nu_{z} \sigma_z + \underbrace{\sum_{l=x,y}\sum_{k=1}^2{\sigma^{(l)}_k u_{lk}(t)}}_{H_c(t)},
\label{eq:nitrogen_vacancy_center_ham}
\end{align}
where $\nu_{z} = \SI{0.158}{MHz}$, $a_{zz} = -\SI{0.152}{MHz}$ and $a_{zx} = \SI{-0.11}{ MHz}$, $\sigma^{(l)}_k$ is the $l$th Pauli operator on qubit $k$, and $u_{lk}(t)$ is a time-dependent control field. The range of control is $\mathbb{X}_\text{NV}^{(2)} = \{\SI{-1}{MHz} \leq u_{lk} \leq \SI{1}{MHz}\}$ and the final gate time is $T=\SI{2}{\micro\second}$.

{
The third system $\H_\text{tra}^{(L)}$ is an effective Hamiltonian model for cavity quantum electrodynamics (cQED)~\cite{effective_ham_gambetta} for two transmons or qubits as a proxy for the IBM quantum circuits~\cite{qiskit},
\begin{align}
  \frac{H_\text{tra}^{(2)}(t)}{\hbar}
  &= \sum_{l=1}^2\omega_l \hat{b}_l^\dagger \hat{b}_l + \frac{\eta_l}{2}\hat{b}_l^\dagger \hat{b}_l (\hat{b}_l^\dagger \hat{b}_l - \mathds{1}) \\\nonumber
  &+ J\sum_{l=1}^2(\hat{b}_l^\dagger \hat{b}_{l+1} + \hat{b}_{l} \hat{b}_{l+1}^\dagger) + \underbrace{\sum_{l=1}^2{u_l(t)(\hat{b}_l+\hat{b}_l^\dagger)}}_{H_c(t)}.
  \label{eq:transmon}
\end{align}}

This model consists of Duffing oscillators with frequency $\omega_l=\SI{5}{GHz}$ representing the qubits with an anharmonicity $\eta_l = \SI{0.2}{GHz}$, qubit coupling $J$, and a control field $u_l$ per qubit. This is a special case of the Bose-Hubbard model~\cite{bosehubbard} with $\hat{b}_l$ representing the boson annihilation operator on the $j$th qubit. The control field $u_l(t)$ is real by construction in addition to extra constraints imposed on the space of possible controls $\mathbb{X}$. The range of control is given by $\mathbb{X}_\text{tra}^{(2)}= \{\SI{-0.2}{GHz} \leq u_{l} \leq \SI{0.2}{GHz}\}$ and the final gate time is $T=\SI{20}{\micro\second}$.

\begin{figure*}
\centering
\includegraphics[width=1.\linewidth]{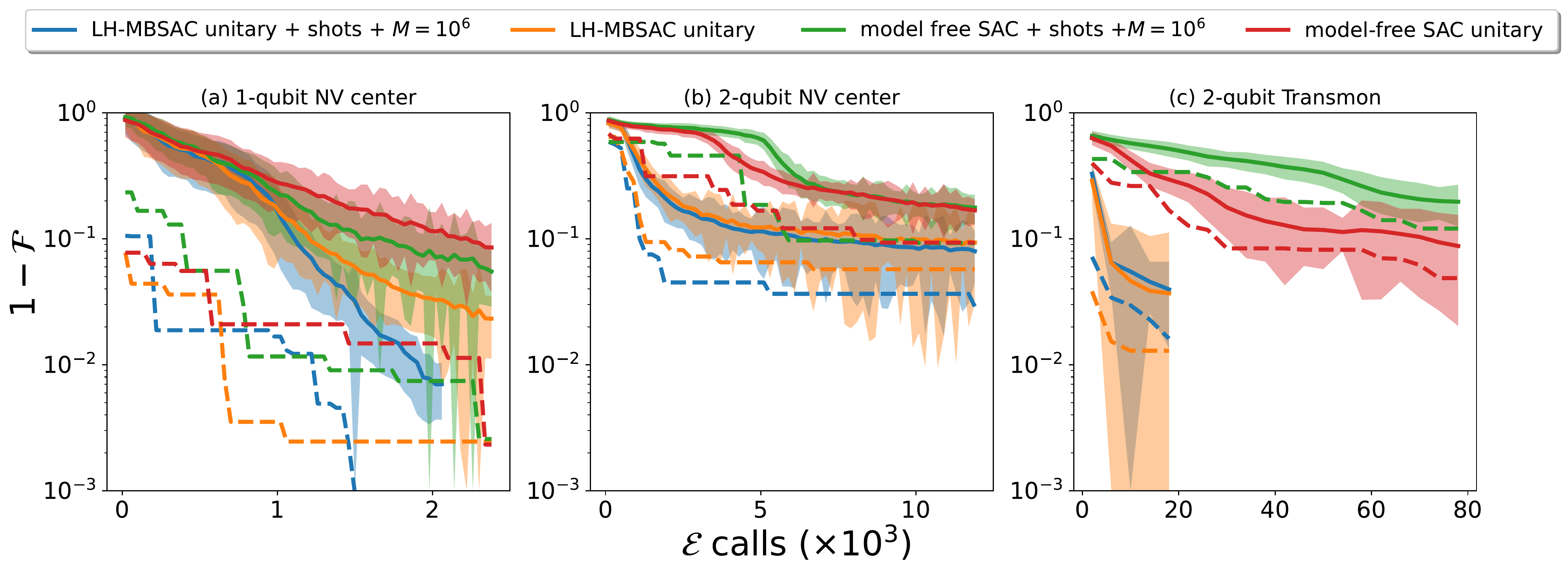}
\caption{The closed system fidelity $\mathcal{F}$ of the Hadamard gate for (a) $H_\text{NV}^{(1)}$, and of the CNOT gate for (b) $H_\text{NV}^{(2)}$ and (c) $H_\text{tra}^{(2)}$ as a function of the number of environment $\mathcal{E}$ calls. The mean fidelity over $100$ controllers is plotted as a solid line with the shading indicating two standard deviations, and the maximum fidelity is indicated by the dashed line. LH-MBSAC or model-free SAC with the unitary tag indicates the shot-noise-free closed system problem in Eq.~\eqref{eq:control_problem_standard} and single shot measurements are indicated likewise.
{We terminate the algorithm early at $\mathcal{F}>0.98$ for LH-MBSAC with and without single shot measurements since the model simulations are expensive and the learned model at this point can be used to further optimize the moderately high fidelity RL pulses further as shown in Sec.~\ref{ssec:graped_sac}.}
The sample complexity of LH-MBSAC is significantly improved for the two-qubit transmon and the NV center over model-free SAC for the closed system control problem and with single shot measurements (of size $M=10^6$), using AAPT. We average these results over three seeds of each algorithm run where a seed refers to a single algorithm run from scratch with a fresh set of randomly initialized parameters.}
\label{fig:sample_complexity_results}
\end{figure*}

For the two-qubit system, the target gate is CNOT and for the one-qubit system, it is the Hadamard gate. Pulses are discretized in accordance with the scheme introduced in Sec.~\ref{ssec:discretization} for the number of timesteps, $N=20$. We follow the parameter restrictions for all systems introduced in Refs.~\cite{c3, effective_ham_gambetta, nv_center, nv_center_1q}. Moreover, due to limited support in our auto-differentiation library~\cite{pytorch}, we simulate the complex dynamics by mapping the complex ODE to two real coupled ODEs~\cite{complex_real_isomorph} (see App.~\ref{app:real_complex_isomorphism} for more details on our ODE solver).

The following sections are organized as follows. In Sec.~\ref{ssec:sample_efficient_closed}, we demonstrate a sample complexity improvement for the different control problems discussed above in a noisy closed setting. For the subsequent sections, we study the two-qubit transmon control problem in more detail. The results were similar for other systems that we studied. In Sec.~\ref{ssec:sample_complexity_vs_delta},
{we study the effect of increasing the estimated Hamiltonian error from its true value on the sample complexity of control.}
Sec.~\ref{ssec:graped_sac} discusses how the learned Hamiltonian in LH-MBSAC can be further utilized for model-based control using gradient-based methods like GRAPE. Sec.~\ref{ssec:open_sys_results} extends results from the closed setting to the noisy open system setting. Finally, in Sec.~\ref{ssec:lims_and_silvers}, we highlight some limitations and silver linings of the LH-MBSAC {and the RL-for-control approach for our specific MDP (Eq.~\eqref{eq:qc_problem_for_rl}) in this paper} and provide promising ideas to circumvent some of the issues.

\subsection{Sample Efficiency for Closed System Control}\label{ssec:sample_efficient_closed}

In this section, we only consider closed or unitary system control with and without single shot measurements defined in Sec.~\ref{ssec:closed-system-dynamics}. From here on, we refer to single shot measurements as just ``shots''.

Unitary control (with closed system dynamics) is implemented for shots as a special case of open system control where the dissipation operator $\mathfrak{L}$ is $0$. The Choi operator $\choi$ corresponding to the gate realized by the controls is obtained by sampling from the binomial distribution in Eq.~\eqref{eq:binomial_gate_measurements} with $M=10^6$ shots per measurement operator. By Hoeffding's inequality~\cite{mohri2018foundations}, we know that with probability $1-0.01$ the error in the estimator of $q_l$ is $10^{-3}$. Or generally, with probability $1-\delta$, for $\epsilon$ error, we require $O(\log{\frac{1}{\delta}}/\epsilon^2)$ measurements. The AAPT method~\cite{ancilla_assisted_quantum_process_tomography} (see Sec.~\ref{ssec:open_system_dynamics}) uses $M \times 3^L$ shots in total for $3^{L}$ possible measurement operators for an $L$-qubit system, which is quite expensive.

Further sparsity restrictions on the structure of $\choi$ imposed by a $k$-local Hamiltonian, where qubit interactions up to only the nearest $k \leq L$ qubits are assumed, can allow the shot cost to go down to $O(4^k(\log{M})/\epsilon^2)$ for $M$ observables due to a reduction in the number of observables that need to be measured or tracked which is
{asymptotically optimal in the number of measurements~\cite{huang2020predicting}. However, since the goal of this paper is gate control, these costs are generally unavoidable to completely verify gate performance. In practice, such gates are only limited to a few qubits and operations on many qubits
are achieved in the circuit formalism through gate composition~\cite{nielsen2010quantum,qiskit}.}

{We randomly initialize the learnable system Hamiltonian using the Pauli basis parametrization in Eq.~\eqref{eq:learnable_system_hamiltonian_pauli_param} with coefficients $\zeta_i \sim \text{Uniform}(-1,1)$. The environment's data buffer $D_{\mathcal{E}}$ that stores the model's training data, i.e., the initial exploration dataset (see Algorithm~\ref{algorithm:LH-MBSAC}), consists of $1$, $20$, and $100$ pulse sequences for the one-qubit NV, two-qubit NV and two-qubit transmon systems respectively. A more detailed discussion of the amount of training data needed for Hamiltonian learning is presented in Appendix~\ref{app:how_much_data}. These data are collected using random uniform policy actions during the first run of the LH-MBSAC algorithm.}

\begin{figure*}
  \centering
  \includegraphics[width=1.\linewidth]{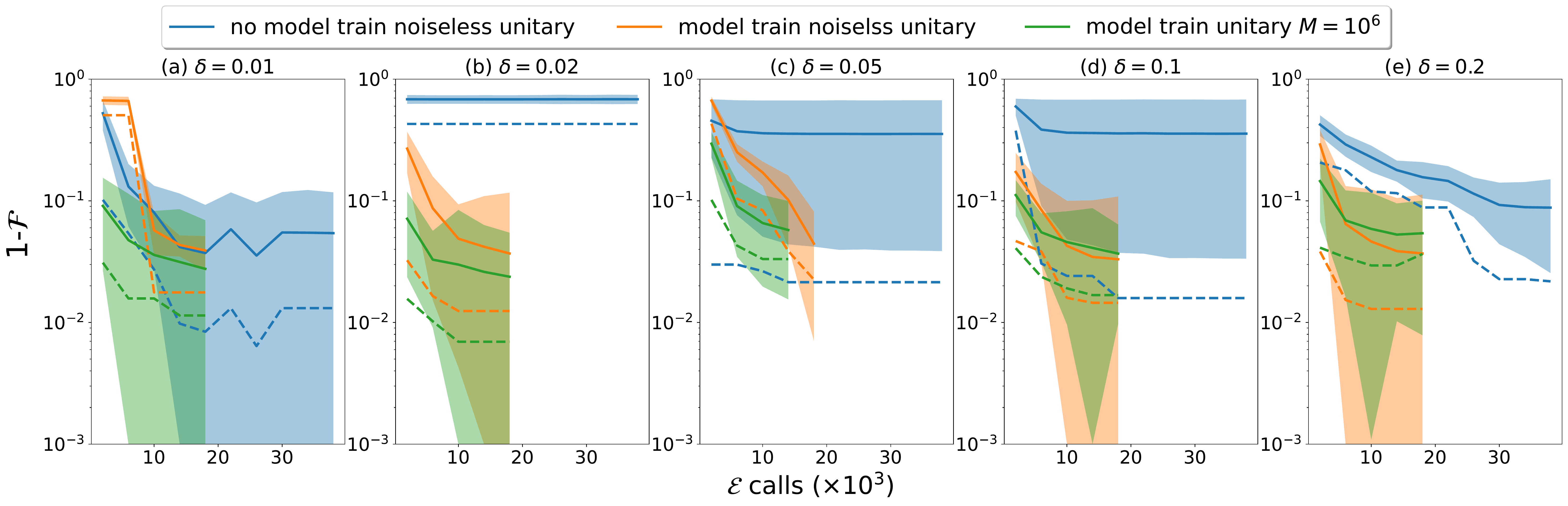}
  \caption{
  {
  Sample complexity or $\mathcal{E}$ calls of LH-MBSAC for the two-qubit transmon control problem as a function of spectral norm error $\delta$, quantifying closeness of the learned system Hamiltonian $H_0(\pmb{\zeta})$ and the true system Hamiltonian $H_0$. The cases for $\delta=0.01, 0.02, 0.05, 0.1, 0.2$ are plotted in (a)--(e). The mean fidelity over $100$ controllers is plotted as a solid line with the shading indicating two standard deviations and the maximum fidelity is indicated by the dashed line. The `noiseless unitary' is the no shot noise setting where the exact unitary is seen by the algorithm while alternatively the unitary is estimated using AAPT with $M=10^6$ shots per observable characterizing the Choi state. The `no model train' line indicates the setting where no learning of $H_0(\pmb{\zeta})$ occurs and $\delta$ is fixed while the `model train' lines denote the setting where $\delta$ is reduced through model training. In general, we see that there are some instances where the RL agent is able to optimize the objectively wrong model $\delta=0.2, 0.01$ and there is a non-linear dependence of $\mathcal{E}$ calls on $\delta$, i.e., a large $\delta$ can produce better model-predictive trajectories with a smaller unitary prediction error. This points us to consider the idea of learning Hamiltonians that are only `locally consistent'. Once learning $H_0(\pmb{\zeta})$ is enabled, algorithmic performance is restored in both the noiseless (with no shot noise) and shot-noise unitary settings. The number of measurements is $M=10^6$ per observable.
  }}
  \label{fig:transmon_hdelta_experiment}
\end{figure*}

{The exploration dataset is then used to learn the system Hamiltonians $H_{0_\text{NV}}^{(1)}$, $H_{0_\text{NV}}^{(2)F}$, $H_{0_\text{tra}}^{(2)}$ via supervised learning of $\model$ using the dynamics prediction loss function (Eq.~\eqref{eq:regression_loss}) until a validation loss of around $10^{-3}\times 2^{2q} \times \texttt{batch\_size}$ is reached, where $\texttt{batch\_size}$ is the number of samples used for a single training policy update. Here $q$ is the number of qubits and $q=2$ for the theoretical unitary and $q=4$ for the Choi state (due to the Choi-Jamiolkowski isomorphism in AAPT). }

{After this, we switch to the model $\model$ to generate synthetic samples to train the policy $\pi$. Whilst concurrently maintaining policy interactions and attempting control of the system via the policy $\pi$, the model is successively trained in periods with fresh data to reduce the model error even further. Once the policy starts producing pulses with nearly optimal fidelities of around $0.98$, we terminate the algorithm and use the learned Hamiltonian to further optimize the pulses using gradient-based methods like GRAPE to (a) reduce sample complexity costs and (b) improve runtime of LH-MBSAC, since the model simulations are computationally expensive. We found that terminating around $0.98$ ensures that the application of further gradient-based methods doesn't cause the control parameters to diverge too much from their initial values thereby retaining, at least partially, their favourable robustness properties \cite{self2}. Step (b) is discussed in detail in Sec.~\ref{ssec:graped_sac}. }

{The results for LH-MBSAC and model-free SAC for the one- and two-qubit control problems are shown in Fig.~\ref{fig:sample_complexity_results}. We consider LH-MBSAC's performance with shots by estimating the gate using its corresponding estimated Choi state $\choi$ using AAPT with $10^6$ shots per observable. The sample complexity of LH-MBSAC to achieve a maximum fidelity significantly improves, by at least an order of magnitude, upon the model-free baseline in both cases, although it is more significant for the two-qubit transmon.
}

\subsection{Sample Complexity as a Function of Hamiltonian Error}
\label{ssec:sample_complexity_vs_delta}

{
Continuing with the closed system control problem, in this section, we study the relationship between sample complexity and error in the estimated model Hamiltonian $H_0(\pmb{\zeta})$ compared to the true system Hamiltonian $H_0$ as the error is increased. This relationship is highly non-linear or irregular and is discussed in detail later in the section. On a high level, the purpose of this section is to understand the interplay between control and model learning especially if the model is inaccurate. Can we still learn a near optimal control policy even if the model is incorrect?
To an extent, yes: we show that when the model error is small, LH-MBSAC is able to successfully find a near optimal control pulse, even with an incorrect model.

We define the model error $\delta$ as in Ref.~\cite{burgarth2022one}:
\begin{equation}
  \delta = \left\| H_0(\pmb{\zeta}) - H_0 \right\|
  \label{eq:delta_to_true_ham}
\end{equation}
where $\|\cdot\|$ is the spectral norm (the largest singular value) of $H_0(\pmb{\zeta})-H_0$. For this study, we compare two settings for some value of $\delta$ in each experimental run: (i) \emph{learning the system Hamiltonian}, i.e., $\delta$ is decreased from its initial value; (ii) \emph{not learning the system Hamiltonian}, i.e., $\delta$ remains fixed throughout the experiment. Case~(ii) effectively corresponds to Algorithm~\ref{algorithm:LH-MBSAC} without any model training, i.e., we do not attempt to minimize $L_\text{model}(D_\text{train})$ to update the model and instead set the model to have a fixed constant Hamiltonian error $\delta$. The range of Hamiltonians corresponding to different $\delta$ values are chosen by randomly sampling the true Hamiltonian with rejection using Gaussian perturbations. The non-linear dependence on the sample complexity of LH-MBSAC as a function of $\delta$ for the two-qubit transmon control problem for both cases is shown in Fig.~\ref{fig:transmon_hdelta_experiment}(a)--(e) for $\delta= \in \{0.01, 0.02, 0.05, 0.1, 0.2\}$.

For the two-qubit transmon problem, the $\delta=0.02, 0.05, 0.1$ results show worse performance compared to the $\delta=0.2$ results for the theoretical unitary control problem (without measurement noise). This indicates that some model system Hamiltonians $H_0(\pmb{\zeta})$ with a larger $\delta$ predict dynamics more consistent with the true system Hamiltonian $H_0$ dynamics than $H_0(\pmb{\zeta})$ with a smaller $\delta$. However, learning $H_{0_\text{tra}}^{(2)}$ for all shown cases restores performance for both the noiseless unitary and shots-based closed system control problems.

To explain these empirical results and make them more intuitive, we now make use of the integration by parts lemma of Ref.~\cite{burgarth2022one} that bounds $\delta$ by the unitary prediction error of the ODE model w.r.t. the environment for the unitary control problem Eq.~\eqref{eq:control_problem_standard}.

\begin{proposition}\label{prop:model_prediction_unitary_bound_main}
  The following bound holds for the difference between the unitary model's predicted state $U_{\model}$ and the environment's unitary state $U_{\mathcal{E}}$,
  \begin{multline}
    \left\|U_{\mathcal{E}} - U_{\model}\right\|_{\infty,t} \\
      \leq t^2 \delta
      \left(\frac{1}{t}+ \frac{2}{t}\|H_c\|_{1,t}+\|H_{\pmb{\zeta}}\| + \|H_{\mathcal{E}}\|\right)
  \end{multline}
  where $\|\cdot\|$ is the spectral norm and for some linear operator $A$, we have $\|A\|_{\infty, t}=\sup_{s\in[0,t]}\|A(s)\|$ and $\|A\|_{1,t}=\int_0^t ds\|A(s)\|$.
\end{proposition}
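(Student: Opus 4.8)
The plan is to treat the two propagators as solutions of one and the same linear matrix ODE driven by different generators, and to control their difference by a single integration by parts, exactly as in the lemma of Ref.~\cite{burgarth2022one}. Both $U_{\mathcal{E}}$ and $U_{\model}$ solve $\dot U = -\tfrac{i}{\hbar}HU$ with $U(0)=\eye$, the former with generator $H_{\mathcal{E}} = H_0 + H_c$ and the latter with $H_{\pmb{\zeta}} = H_0(\pmb{\zeta}) + H_c$ as in Eq.~\eqref{eq:learnable_system_hamiltonian_pauli_param}. The first and decisive observation is that \emph{the control part cancels in the difference of the generators}: $H_{\mathcal{E}}-H_{\pmb{\zeta}} = H_0 - H_0(\pmb{\zeta})$ is time-independent with spectral norm exactly $\delta$ by Eq.~\eqref{eq:delta_to_true_ham}. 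This turns the perturbation into a single constant operator, which is what makes the clean polynomial-in-$t$ estimate (rather than an exponential Gr\"onwall bound) available.

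Next I would represent the difference by the variation-of-parameters (Duhamel) identity. Introducing two-time propagators $U_{\mathcal{E}}(t,s)$, $U_{\model}(s,0)$ and differentiating $s\mapsto U_{\mathcal{E}}(t,s)U_{\model}(s,0)$ gives
\[
  U_{\mathcal{E}}(t) - U_{\model}(t) = -\tfrac{i}{\hbar}\int_0^t U_{\mathcal{E}}(t,s)\,\bigl(H_{\mathcal{E}}(s)-H_{\pmb{\zeta}}(s)\bigr)\,U_{\model}(s,0)\,ds.
\]
A direct norm estimate here, using that both propagators are unitary (spectral norm $1$), already yields the leading contribution $t\delta$. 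To obtain the full stated bound I would instead integrate by parts against the antiderivative $\Lambda(s)=\int_0^s\bigl(H_{\mathcal{E}}-H_{\pmb{\zeta}}\bigr)\,ds' = s\,(H_0-H_0(\pmb{\zeta}))$. Because $U_{\mathcal{E}}(t,t)=\eye$ and $\Lambda(0)=0$, the boundary contribution collapses to $-\tfrac{i}{\hbar}\Lambda(t)U_{\model}(t)$, and the remaining integrand becomes the sandwiched, commutator-type operator $H_{\mathcal{E}}(s)\Lambda(s)-\Lambda(s)H_{\pmb{\zeta}}(s)$.

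Finally I would take spectral norms. Unitarity removes every propagator factor, so that
\[
  \|U_{\mathcal{E}}-U_{\model}\|_{\infty,t} \leq \|\Lambda(t)\| + \int_0^t \bigl\|H_{\mathcal{E}}(s)\Lambda(s)-\Lambda(s)H_{\pmb{\zeta}}(s)\bigr\|\,ds,
\]
with $\|\Lambda(t)\|=t\delta$ and $\|\Lambda(s)\|\le s\delta\le t\delta$. Splitting the generators as $H_{\mathcal{E}}=H_0+H_c$, $H_{\pmb{\zeta}}=H_0(\pmb{\zeta})+H_c$ inside the integrand is the step that produces the two families of terms in the claim: the control pieces combine into the commutator $[H_c(s),\Lambda(s)]$, bounded by $2\|H_c(s)\|\,\|\Lambda(s)\|$ and hence contributing $2t\delta\,\|H_c\|_{1,t}$ after integration; the system pieces give $\int_0^t(\|H_{\mathcal{E}}\|+\|H_{\pmb{\zeta}}\|)\|\Lambda(s)\|\,ds \le t^2\delta(\|H_{\mathcal{E}}\|+\|H_{\pmb{\zeta}}\|)$. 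Adding the boundary term $t\delta$ and pulling out the common factor $t^2\delta$ reproduces the stated inequality.

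I expect the main obstacle to be the integration-by-parts bookkeeping rather than the estimates themselves: one must differentiate the noncommuting product $U_{\mathcal{E}}(t,s)\Lambda(s)U_{\model}(s,0)$ with the correct signs for the backward two-time propagator $\partial_s U_{\mathcal{E}}(t,s)=+\tfrac{i}{\hbar}U_{\mathcal{E}}(t,s)H_{\mathcal{E}}(s)$, correctly isolate the surviving boundary term, and then choose how loosely to bound the intermediate term $H_{\mathcal{E}}\Lambda-\Lambda H_{\pmb{\zeta}}$ so that the control contribution is captured by $\|H_c\|_{1,t}$ and the uniform choice $\|\Lambda(s)\|\le t\delta$ yields the claimed coefficients. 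Once unitarity is invoked the norm inequalities are routine; the only genuine care is in the sign conventions and in the split that surfaces the $\|H_c\|_{1,t}$ term.
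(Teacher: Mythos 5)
Your proposal is correct and follows essentially the same route as the paper: both rest on the integration-by-parts lemma of Ref.~\cite{burgarth2022one}, with the two key observations being that the generator difference $H_{\mathcal{E}}-H_{\pmb{\zeta}}=H_0-H_0(\pmb{\zeta})$ is time-independent (so the integrated difference is exactly $t\delta$) and that the triangle inequality splits $\|H\|_{1,t}$ into the $t\|H_0\|$ and $\|H_c\|_{1,t}$ contributions. The only difference is that you re-derive the lemma itself via the Duhamel identity and integration by parts against $\Lambda(s)$, whereas the paper simply cites it; your derivation is sound and, if anything, makes explicit where each term of the stated coefficient originates.
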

\begin{proof}
  See proof of Prop.~\ref{prop:model_prediction_unitary_bound} in App.~\ref{app:bounds}.
\end{proof}

Proposition~\ref{prop:model_prediction_unitary_bound_main} hints at the intuition for why the Hamiltonian error is generally not linearly related to the propagator error.

Although there are some works with better relational bounds on the Hamiltonian error in terms of the observable error, these hinge on the ability to maintain a privileged basis and/or access to special probe states such as the Gibbs state basis~\cite{anshu2021, haah2022}. These bounds crucially do not include the propagator error, thanks to previous assumptions, which is a more general approach to bounding the quantum dynamical evolution error. Of course, there is always a price to be paid for generality and in this case, it is that the error bounds are less constrained and the link between the Hamiltonian and the unitary error becomes non-linear for the general case of the bound.

From Prop.~\ref{prop:model_prediction_unitary_bound_main}, we infer that the unitary model prediction error or the supervised learning regression loss $L_\text{model}(D_{\text{train}})$ in Eq.~\eqref{eq:regression_loss} being small does not imply closeness between learned and true system Hamiltonian, i.e., $\delta \to 0$. However, in the converse case, $\delta$ being very small necessarily implies small propagator error. This is illustrated for the two-qubit transmon Hamiltonian in Fig.~\ref{fig:local_hams_and_graped_conts}(a). The Hamiltonians are again sampled using Gaussian perturbations to the transmon Hamiltonian. There is also significant variation in the unitary model prediction error, even for the same value of $\delta$ for different repetitions of the random Hamiltonian. However, we see that with decreasing $\delta$, the variation decreases, which is also explained by the above bound. Finally, the same pattern can also be observed if we take $\delta$ to be the mean squared difference between the Pauli coefficients of the true and learned Hamiltonian. Thus, this behaviour is general and not limited to the choice of $\delta$.

The main takeaway of this section, that will be taken further in the next section, is that for the control problems considered here it is only necessary to learn models that are `locally consistent' in terms of the unitary trajectories they generate, and small unitary prediction errors can be achieved by models with non-negligibly small $\delta$.
}

\begin{figure*}
  \centering
  \includegraphics[width=2.\columnwidth]{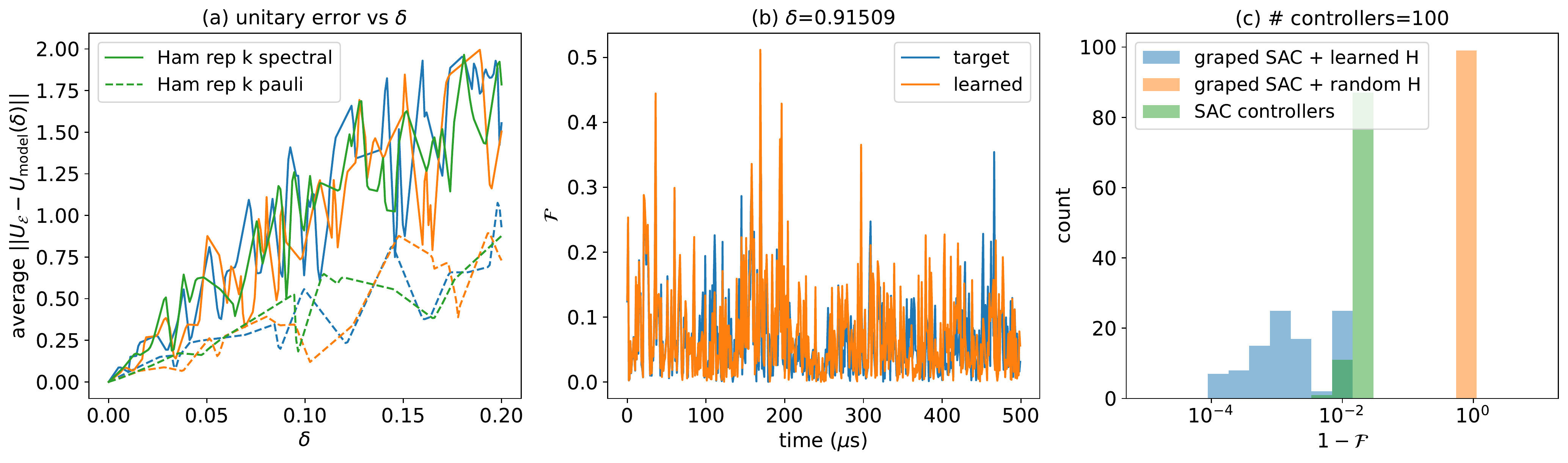}
  \caption{
    {
    (a) An illustration of the non-linear relationship between the unitary model prediction error $\left\|U_{\mathcal{E}} - U_{\model}\right\|$ and Hamiltonian spectral norm (solid) error or mean squared Pauli basis difference (dashed) error as $\delta$ for the two-qubit transmon control problem. For the same $1000$ random control pulses, we evaluate the average unitary prediction error of $\model$ with increasing $\delta$ for three different uniform randomly sampled two-qubit Hamiltonians $H_0(\pmb{\zeta})$ to illustrate the variation in response to the unitary error. (b) Local and global unitary trajectories: $\mathcal{F}$ as a function of a random control pulse with either the learned system Hamiltonian $H_0(\pmb{\zeta})$ or the true system Hamiltonian $H_0$. The learned $H_0(\pmb{\zeta})$ trajectories do not coincide with the global trajectory with $\delta=0.91509$, with the majority contribution coming from a global phase factor such that $\Tr[ H-H_0(\pmb{\zeta})] \approx 0.9$. Both trajectories start off extremely close and start diverging as time increases due to accumulation of small errors in the predicted dynamics. (c) The learned $H_0(\pmb{\zeta})$ can be leveraged using GRAPE to further optimize the fidelities of LH-MBSAC's controllers. We plot a histogram of $100$ LH-MBSAC controller infidelities $1-\mathcal{F}$ before and after applying GRAPE on these controllers using the learned Hamiltonian and a random Hamiltonian. The LH-MBSAC fidelities are significantly improved after applying GRAPE. The appropriate baseline or benchmark representing our ignorance of $H_0$ is a random $H_0(\pmb{\zeta})$ (with uniform random Pauli parameters) which, when plugged into GRAPE, yields extremely low fidelities near $0$ towards the extreme right-hand side of the plot.
    }
  }
  \label{fig:local_hams_and_graped_conts}
\end{figure*}

\subsection{Leveraging the Learned Hamiltonian with GRAPE}\label{ssec:graped_sac}

{
Proposition~\ref{prop:model_prediction_unitary_bound_main} paves the way to learning system Hamiltonians that are locally consistent with the unitary trajectories they generate. By local we mean that the learned Hamiltonian is consistent with the true Hamiltonian on only a subset of all possible generatable trajectories relevant to the control problem. In this section, we delve deeper into the learned model errors and also show that these local models can be leveraged to further optimize the fidelities of LH-MBSAC's controllers using gradient-based methods like GRAPE~\cite{GRAPE,sophie_grape}.

During the model's $\model$ training phase, $H_0(\pmb{\zeta})$ is made consistent with trajectories uniform randomly drawn from the data buffer $D_{\mathcal{E}}$ by minimizing the regression loss $L_\text{model}(D_{\mathcal{E}})$. This allows us to learn a model of the environment that can predict locally consistent unitary trajectories (i.e., at the scale of the control problem). In other words, the learned system Hamiltonian $H_0(\pmb{\zeta})$ does not have to coincide with the true system Hamiltonian $H_0$ for it to be useful for the optimal control task. Indeed, we take the Hamiltonian learned for the two-qubit transmon in Fig.~\ref{fig:sample_complexity_results}(c) and find that it has $\delta=0.91509$. Diving deeper, the matrix difference between the true $H_0$ and learned Hamiltonian $H_0(\pmb{\zeta})$ is,
\begin{multline*}
    H-H_0(\pmb{\zeta}) =\\ \small \begin{bmatrix}
                    -0.912 & 0.001 & -0.001 & 0.001 \\
                    0.001 & -0.914 & 0.001-0.001i & 0.001+0.001i \\
                    -0.001 & 0.001+0.001i & -0.913 & -0.001 \\
                    0.001 & -0.001-0.001i & -0.001 & -0.914 \\
                \end{bmatrix}.
\end{multline*}
Notably, we can see that most of the error is actually in $\Tr[ H-H_0(\pmb{\zeta})]$ with the true Hamiltonian being learned up to a scale factor of around $0.9$ with the rest of the parameter error being small. This is precisely the global phase error that cannot be learned~\cite{flammia_bayesian_scalabe}.

Despite this discrepancy between the true and learned system Hamiltonians, we find mostly good local agreement between the two random trajectories they induce thanks to the supervised training phase of the model. We show in Fig.~\ref{fig:local_hams_and_graped_conts}(b) the local and global trajectories corresponding to $H_0(\pmb{\zeta})$ and $H_0$ for the two-qubit transmon which shows that the two unitary trajectories w.r.t. the CNOT fidelity are not always coinciding. More specifically, we can see a high overlap in the fidelities induced by random pulses for times between \SI{0}{\micro\second} to around  \SI{100}{\micro\second}. Moreover, the small differences in the generator only start manifesting as the time scales get longer and this can be explained by accruing of small errors in predicted dynamics. This confirms that the unitary model prediction error grows as a function of time. This makes intuitive sense since predictions far into the future, compared to their time-wise preceding counterparts, must necessarily have more built-up error. Furthermore, this learned `local' $H_0(\pmb{\zeta})$ and the controllers found by LH-MBSAC can be used in conjunction with the model-based GRAPE control algorithm~\cite{GRAPE,sophie_grape} to optimize the SAC controller fidelities much more quickly than via just RL alone using accelerated second-order gradient descent. The LH-MBSAC controllers act as seeds, so GRAPE does not move too far away in pulse parameter space compared to where it started. Although not done here, this can also be imposed as an explicit constraint. Note that the question of exactly when to switch over to GRAPE beyond heuristics remains unanswered.   

The fidelities after applying GRAPE are evaluated w.r.t. the true system Hamiltonian $H_0$. Usually LH-MBSAC controllers have moderately high fidelities around $\mathcal{F} > 0.98$ which are improved to $\mathcal{F} > 0.999$. In Fig.~\ref{fig:local_hams_and_graped_conts}(c), we show the RL controllers being optimized further using the learned $H_0(\pmb{\zeta})$ with GRAPE. Experiments in this section for the two-qubit NV center system yield similar results and can be found in App.~\ref{app:graped_sac_nv}.
}

\subsection{Open System Control with Single Shot Measurements}\label{ssec:open_sys_results}

Due to the interpretable nature of our ODE model's ansatz in Eq.~\eqref{eq:learnable_system_hamiltonian_pauli_param}, it is pertinent to ask if two competing but linear terms in the model $\model$ can be learned simultaneously.
{In this section, we find that for our model learning setting, the answer to this question is no. However, this is not general to all problem settings and could potentially be pursued in future work.}

In the previous sections, we only learn one term represented by $H_0(\pmb{\zeta})$. Utilizing the open system formulation of the control problem in Sec.~\ref{ssec:open_system_dynamics}, we consider Lindblad dissipation along with shot noise for the two-qubit transmon control problem in Eq.~\eqref{eq:noisy_gate_control_problem}. Specifically, we consider the decoherence operator $\mathfrak{L}_\text{diss}^{(l)} = \sqrt{\frac{2}{R^*_l}}b_lb_l^\dagger$, acting on the $l$th qubit, and the decay operator $\mathfrak{L}_\text{decay}^{(l)} = \sqrt{\frac{2}{R_l}}b_l$ for $l=1,2$. $R^*_l$ and $R_l$ are the decoherence and decay rates. Both operators are time-independent

{
Alternatively, we can also represent these operators using the adjoint representation but we note that in the context of this learning problem that representation will not make much difference as our algorithm is able to effectively learn the Hamiltonian up to addition of a scalar matrix. However, practically speaking, one can obtain the energy differences of the Hamiltonian via spectroscopy~\cite{ham_spectro_est} which can then be encoded in the eigenvalues of the adjoint representation. It is also possible to learn these eigenvalues using measurements of canonical (Gibbs) states~\cite{anshu2021}.}

We perform experiments for high and low dissipation corresponding to the gate times $R^{{*}^\text{hi}}_l = R^{\text{hi}}_l = 4~\mu\text{s}$, and $R^{{*}^\text{lo}}_l = R^{\text{lo}}_l = \SI{20}{\micro\second}$. Comprising both of these time-independent operators, the Lindblad term $\mathbf{L}_1$ is learned concomitantly with the system Hamiltonian. The results are shown in Fig.~\ref{fig:open_system_results} where the ``learn'' label signifies that $\mathbf{L}_1$ is being learned in addition to the system Hamiltonian $H_0(\pmb{\zeta})$.

\begin{figure}[t]
  \centering
  \includegraphics[width=1\linewidth]{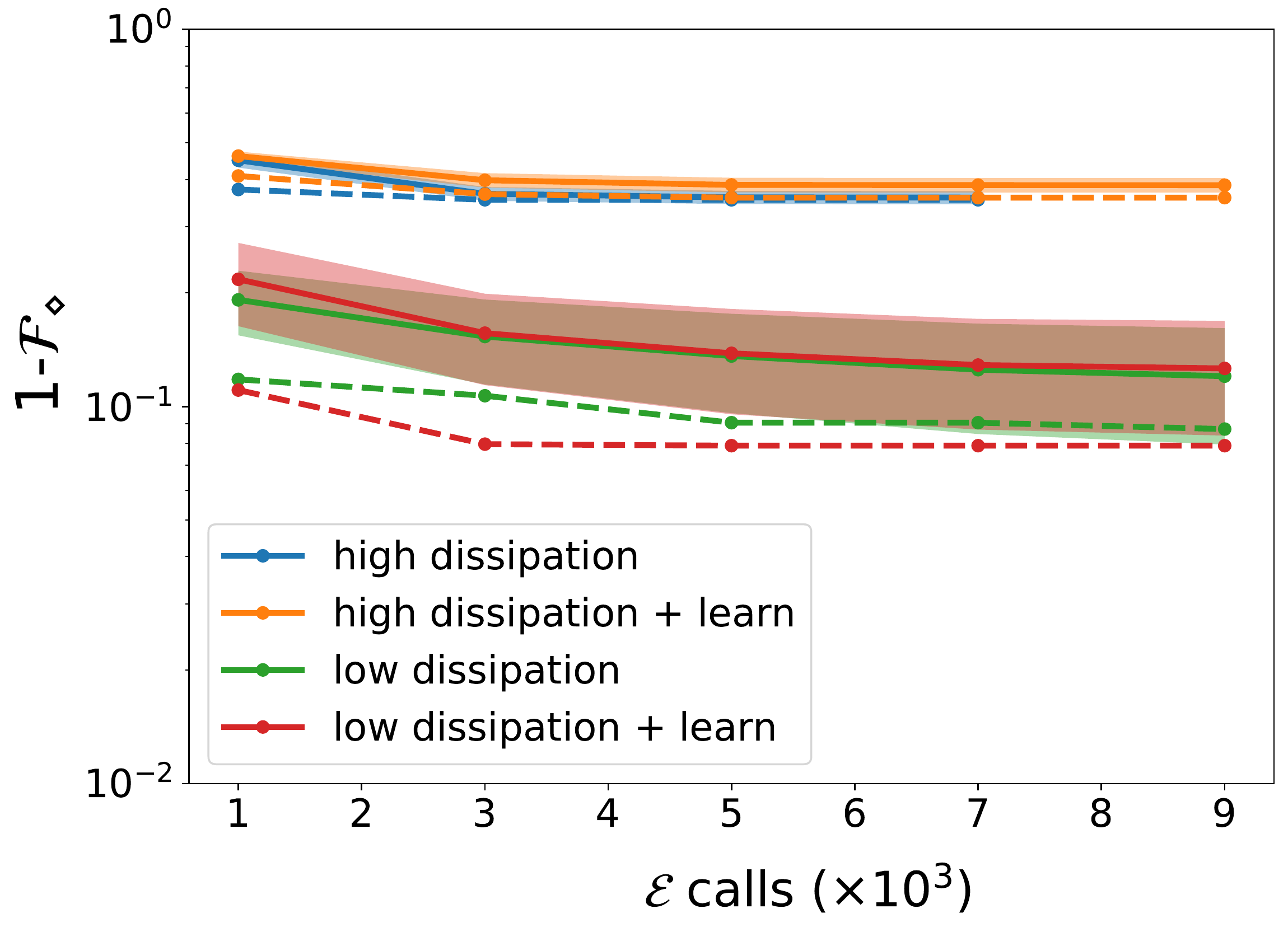}
  \caption{Diamond norm fidelity $\mathcal{F}_\diamond$ for the two-qubit transmon control problem in low and high Lindblad dissipation regimes for LH-MBSAC. The results are averaged over two seeds with the mean $\mathcal{F}_\diamond$ over $100$ controllers shown in solid and the maximum $\mathcal{F}_\diamond$ in dashed lines. Shading denotes two standard deviations from the mean. Here, the `learn' label signifies that dissipation operators are being learned in addition to the system Hamiltonian.}
  \label{fig:open_system_results}
\end{figure}

We use the diamond norm fidelity~\cite{diamondnorm} $\mathcal{F}_\diamond$,
\begin{equation}
  \mathcal{F}_\diamond(\choi(\mathbf{u}(t), t), \choi_\text{target}) = 1-\|\choi(\mathbf{u}(t), t)- \choi_\text{target}\|_\diamond,
  \label{eq:diamondfid}
\end{equation}
instead of the generalised state fidelity since the latter lacks the sensitivity to detect the low dissipation regime (see App.~\ref{app:fid_metrics_comparison}). We find that attempting to learn $\mathbf{L}_1$ while learning $H_0(\pmb{\zeta})$ confers little to no advantage in both the high and low dissipation regimes for this control task. Further investigation shows that the estimate of the system Hamiltonian $H_0(\pmb{\zeta})$ compensates for the observed discrepancy in observed dynamics due to dissipation as much as it is unitarily possible. Moreover, the {learning processes} for $\mathbf{L}_1$ and $H_0(\pmb{\zeta})$ become entangled/mixed so learning multiple independent terms in $\model$ may not be suitable for LH-MBSAC.

\subsection{Limitations and Silver Linings}\label{ssec:lims_and_silvers}

There are two major limitations of LH-MBSAC. The first is that only the system or time-independent part of the Hamiltonian can be learned with the algorithm, while the more difficult problem of learning the time-dependent part of the Hamiltonian~\cite{flammia_bayesian_scalabe} is left as future work.

{Moreover, we found that LH-MBSAC was not able to tackle a three-qubit transmon control problem to obtain a Toffoli gate on an extension of the transmon system. The limitation applied mostly to the RL agent; a viable Hamiltonian is learned that can be leveraged with GRAPE as before. Specific computational details are discussed in App.~\ref{app:three-qubit-transmon}. Essentially, our findings indicate this is an optimization landscape problem and an issue specific to the meta RL strategy of finding optimal pulses instead of a hyperparameter problem. There are two major reasons behind this assessment. Firstly, the values and the gradients for policy and value functions saturate with large training times, i.e., both are stuck in suboptimal extrema, which ultimately culminate with a prematurely optimized reward function. Secondly, since the model Hamiltonian is known beforehand (or also learned), GRAPE equipped with this Hamiltonian and initialized with the highest fidelity LH-MBSAC controllers also gets stuck.}

However, the LH-MBSAC strategy is not limited to SAC and can augment  different RL algorithms for which the three-qubit problem may be tractable.
{Also, since this is likely an optimization landscape issue, a reformulation of the RL control problem could also alleviate this issue by reducing the probability of SAC getting stuck by increasing the range of fidelities the RL agent sees as `proximally optimal'. At present, the agent's goal is to maximize all fidelities it observes, with most of the observations being premature, i.e., before the final gate time. This is highlighted in Fig.~\ref{fig:rl_vs_grapecontrollers} which shows}
the infidelity $1-\mathcal{F}$ as a function of time for $100$ pulses found by LH-MBSAC and GRAPE for the two-qubit transmon control problem. Compared to GRAPE, LH-MBSAC pulses are much more consistent and periodic in terms of the intermediate fidelity values. This highlights that the RL approach is biased towards optimizing intermediate fidelities along with the final target fidelity (since the objective function in Eq.~\eqref{eq:rl_objective} is the {regularized} expected cumulative fidelity). This is quite different from the approach taken by the gradient-based GRAPE algorithm. Despite being interesting from a controller robustness point of view~\cite{self2}, this bias can prevent solutions that do not admit high intermediate fidelities from being found as RL can get stuck in a loop mining medium-level fidelity values. Stepping away from this particular sequential decision-making MDP formulation might be one solution to consider in future work.

There are silver linings for the aforementioned MDP formulation. RL pulses are fidelity-wise better, on average, across the duration of the pulse. Leveraging the learned system Hamiltonian, we can further improve the performance of the RL pulses by using GRAPE with the RL pulse parameters as initialization. As seen in Fig.~\ref{fig:rl_vs_grapecontrollers}, these pulses are still better than the ones found by GRAPE using the learned system Hamiltonian but with completely random pulse initializations, i.e., without LH-MBSAC controllers as seeds.

\begin{figure}[t]
  \centering
  \includegraphics[width=0.95\linewidth]{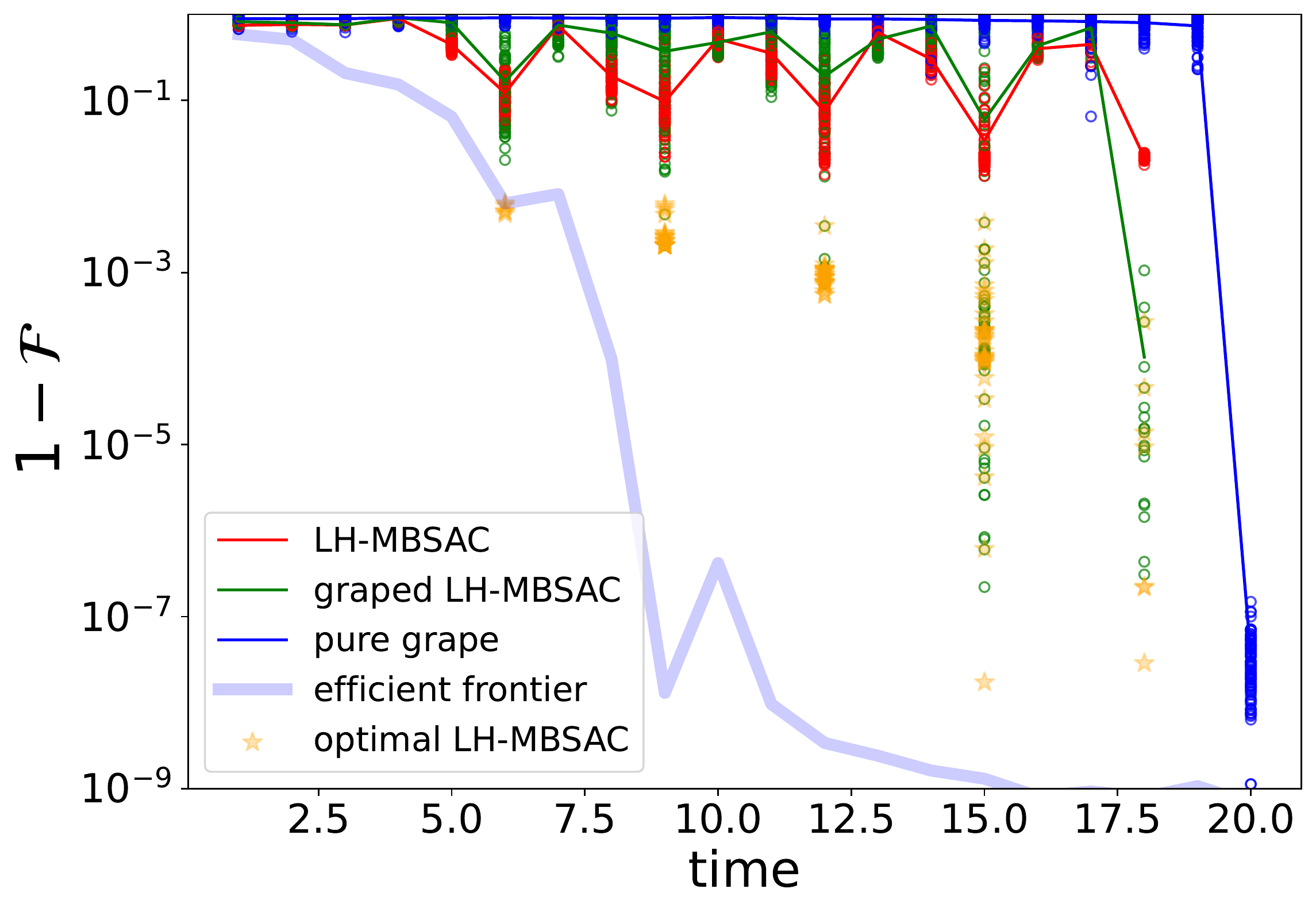}
  \caption{The infidelities over time for $100$ different control pulses found by LH-MBSAC and by GRAPE using the learned system Hamiltonian $H_0(\pmb{\zeta})$ for the two-qubit transmon control problem with final time $T \leq 20~\mu\text{s}$. RL pulses are further optimized using GRAPE. GRAPE is also used to obtain pulses without the RL controls as initial values for a fixed final gate time $T = 20~\mu\text{s}$. Short optimal controls found by RL are identified by truncating RL pulse parameters at times $t \geq \left\{6,9\right\}~\mu\text{s}$ whose final infidelities are shown as stars with $t=6~\mu\text{s}$ being Pareto optimal w.r.t. the efficient frontier (the surface indicating the best fidelity for that time).}
  \label{fig:rl_vs_grapecontrollers}
\end{figure}

Furthermore, this RL bias towards valuing intermediate fidelities allows us to identify optimal pulses that can be executed in short times, which is a difficult problem for GRAPE even if the final gate time is explicitly added to the control objective~\cite{sophie_grape}.

Truncating the control sequence for pulses at time $t$ if the infidelity is below $5 \times 10^{-2}$, we again leverage GRAPE to maximize the final fidelities at these shorter times. These are shown as stars in Fig.~\ref{fig:rl_vs_grapecontrollers} with the fidelities at $t=6~\mu\text{s}$ being approximately Pareto optimal, i.e., the best fidelity for that time. The Pareto optimal efficient frontier is constructed by sampling $100$ GRAPE pulses with random intializations at different final gate times.


\section{Conclusion}

We have presented a learnable Hamiltonian soft actor-critic (LH-MBSAC) algorithm for time-dependent noisy quantum gate control. LH-MBSAC augments model-free soft-actor critic by allowing the reinforcement learning (RL) policy to query a learnable model of the environment or the controllable system. It thereby reduces the total number of queries (sample complexity) required to solve the RL task. The model is a differentiable ODE with a partially characterized Hamiltonian, where only the parametrized time-independent system Hamiltonian is required to be learned. This is a good inductive bias for the quantum control task as ODE trajectories do not intersect, and the Schr\"odinger ordinary differential equation (ODE) preserves unitary evolution, thereby sensibly constraining the space of models to be learned. Using exploration data acquired from the policy during the RL loop, we train the model by reducing a model prediction error over the data. We show that LH-MBSAC is able to reduce the sample complexity for gate control of one- and two-qubit nitrogen-vacancy (NV) centers and transmon systems in unitary and single-shot measurement settings.

Moreover, we highlight that despite the {generally} non-linear relationship between the error in the learned Hamiltonian and the model prediction error, LH-MBSAC's performance is robust to this variation. Furthermore, even if the learned Hamiltonian that minimizes the model prediction error is not the same as the true system Hamiltonian, the learned Hamiltonian {which is locally consistent in terms of its dynamical predictions} can be leveraged using gradient-based methods that require full knowledge of the controllable system, like GRAPE, to further optimize the controllers found by LH-MBSAC. Applying LH-MBSAC in high and low Lindblad dissipation regimes with shot noise, we found that its performance in both was not improved if the Lindblad dissipation terms are also learned in addition to the system Hamiltonian as it is likely that the latter part compensates for the extra dissipation effects.

Despite LH-MBSAC's limitations requiring it to know the time-dependent Hamiltonian and system scalability beyond two qubits (four with single shot measurements due to ancilla assisted process tomography (AAPT)), the algorithm can be used to augment many existing model-free RL approaches for quantum control. This should afford more sample-efficient RL-based optimization of quantum dynamics for near-term noisy quantum processors on a variety of architectures as shown in the paper. Specific tasks can include noisy small circuit optimization, state preparation~\cite{phys_rev_single_shot_reward,bukov} or gate optimization using a partially known model of the underlying dynamics~\cite{dalgaard2020global}. Since having an accurate model can be extremely useful for validation of quantum operations and model bias can be crippling, model-based RL methods like LH-MBSAC can improve the model specifically tailored for some downstream task, e.g., quality assessment of topological codes~\cite{valenti2019} or fine-tuning current implementations of a two-qubit cross resonance gate on some novel architecture~\cite{ding2023} using a pre-existing but partially correct model. Here, the goal for the RL agent would be to help learn effective and potentially scalable models of the target system whilst optimizing the target functional. Another interesting goal in this direction could just be incorporating the number of measurements or queries of the system in the RL objective so that the learning is sample-efficient. Another avenue of future work is to combine LH-MBSAC with a more feasible measurement protocol than AAPT. AAPT is not a hard requirement for our approach and was used here for its theoretically simple estimation of a quantum process. Two angles of attack are either sparsity assumptions on the dynamics generator~\cite{huang2022learninghams} and the generated evolution~\cite{huang2020predicting} or a partially observed Markov Decision Process  formulation of the control problem~\cite{hausknecht2015deep,self1}.

Moreover, despite the scalability problems due to the potentially hindering nature of the RL strategy towards maximizing intermediate fidelities, it can be useful in particular to identify short time optimal pulses. Learning the time-dependent part of the Hamiltonian is harder and might require a stronger learning protocol, e.g., using the zero-order hold method with the learning protocol presented in this paper, Bayesian Hamiltonian Learning~\cite{flammia_bayesian_scalabe} or more informative learning process or Hamiltonian learning methods~\cite{huang2022learninghams,huang2022learningprocesses} which would be exciting to pursue in the future.

The study of the abilities and limitations of our Hamiltonian learning protocol using ZOH will be left to future work. Our code is available at~\cite{code}.

\bibliography{references}

\cleardoublepage
\appendix

\begin{center}
	{\Large \bf Appendices}
\end{center}

\bigskip

Here we present additional details and proofs for the results in the main text.

\section{Mapping Complex Linear ODEs to Coupled Real ODEs and Step-size Effects}\label{app:real_complex_isomorphism}

\begin{figure*}[t]
  \centering
  \includegraphics[width=1.\linewidth]{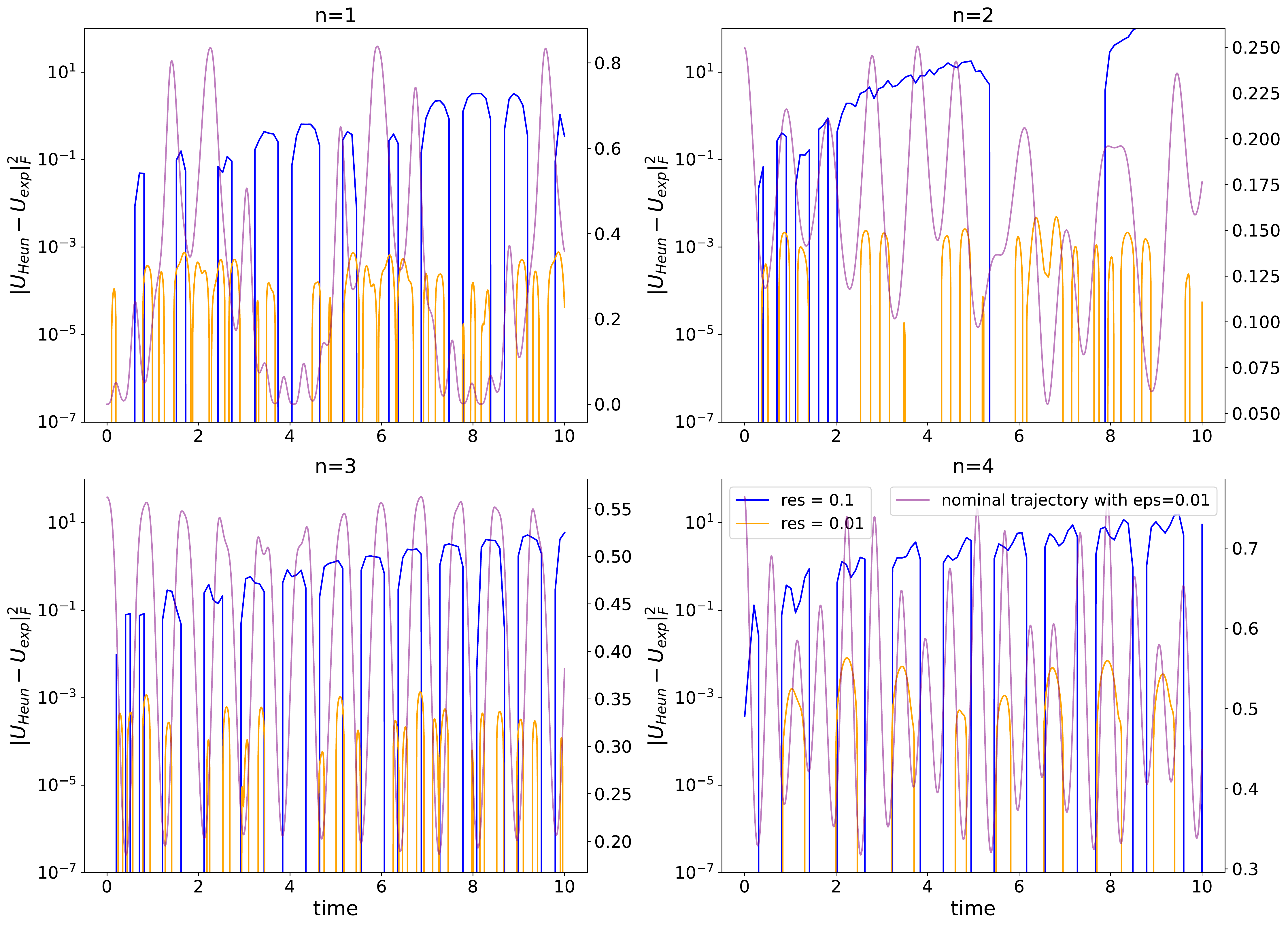}
  \caption{Frobenius norm of the prediction error of the Heun ODE solver~\cite{numerical_analysis_heun} compared to the matrix exponential method. The number of qubits $n$ are shown on top of each subfigure. The random time-dependent sinusoidal Hamiltonians  are as follows:
  for $n=1$, $H=-2.32\sigma_z \cos{2.19t} - 0.01\mathds{1}\sin{3.62t} + 1.79\sigma_x \cos{4.89t} + 3.04\sigma_y\cos{2.69t}$;
  for $n=2$, $H=1.01\sigma_z \mathds{1}\cos{1.44t}+4.5\mathds{1}\mathds{1}\sin{4.55t}-2.7\sigma_y\sigma_z\sin{1.07t}+0.48\sigma_x\sigma_z\cos{2.26t}$;
  for $n=3$, $H=-1.28\mathds{1}\sigma_x\mathds{1}\cos{2.62t}-0.23\sigma_y\sigma_z\sigma_y\sin{3.75t}-1.34\mathds{1}\sigma_y\sigma_x\sin{3.35t}+3.38\sigma_x\sigma_x\sigma_z\cos{2.34t}$;
  for $n=4$, $H=-0.41\mathds{1}\sigma_z\sigma_z\sigma_x\sin{2.86t}+2.19\sigma_y\mathds{1}\sigma_x \sigma_z \sin{1.38t}-0.87\sigma_y\sigma_x\sigma_x\sigma_z\sin{2.26t}+4.06\sigma_x\sigma_x\sigma_z\mathds{1}\sin{1.76t}$ where the shorthand used is $\mathds{1}\sigma_x\mathds{1} \equiv \mathds{1} \otimes \sigma_x \otimes \mathds{1}$.
  e.g. Trace fidelities w.r.t. the generalized CNOT (NOT or X-gate for $n=1$, CNOT for $n=2$, CCNOT for $n=3$ and so on) are shown in the twin axis on the right. It can be seen that the step size of $10^{-1}$ leads to quick accumulation of error seen in the sharp peaks but a step size of $10^{-2}$ is more stable with more than $O(10^3)$ times less prediction error.}
  \label{fig:heun_solver_tests}
\end{figure*}

The quantum control problem in Eqs.~\eqref{eq:control_problem_standard} and Eq.~\eqref{eq:noisy_gate_control_problem} involve ODEs (Eqs.~\eqref{eq:unitary_ode}, \eqref{eq:superoperator_ode}) in the complex domain with a complex vector field map $f_\theta: \mathbb{R} \times \mathbb{C}^d \rightarrow \mathbb{C}^d$ (where $\theta$ denotes some learnable parameters that can be optimized). For the unitary control problem we have a linear map $f_\theta(U(\mathbf{u}(t), t), t) = H_\theta(\mathbf{u}(t), t)U(\mathbf{u}(t), t)$ where $H_\theta$ is a Hermitian Hamiltonian that generates the ODE path of the propagator $U(t)$. We make use of the following isomorphism to map the complex ODE to two coupled real ODEs in $\mathbb{R}^{2d}$ by separating the propagator into its real and imaginary parts $U = U_\text{real} + iU_\text{imag}$ and mapping the Hamiltonian isomorphically $H(\mathbf{u}(t), t) \xrightarrow{\sim}\mathds{1} \otimes H_\text{real}(\mathbf{u}(t), t) - i\sigma_y \otimes H_\text{imag}(\mathbf{u}(t), t)$, to get the following~\cite{complex_real_isomorph} coupled real ODE system,
\begin{equation}
\begin{multlined}
  \dv{}{t}
  \begin{pmatrix}
    U_\text{real}(\mathbf{u}(t), t) \\
    U_\text{imag}(\mathbf{u}(t), t) \\
  \end{pmatrix}\\
  =
  \begin{pmatrix}
    H_\text{imag}(\mathbf{u}(t), t) & H_\text{real}(\mathbf{u}(t), t) \\
    -H_\text{real}(\mathbf{u}(t), t) & H_\text{imag}(\mathbf{u}(t), t) \\
  \end{pmatrix}
  \begin{pmatrix}
    U_\text{real}(\mathbf{u}(t), t) \\
    U_\text{imag}(\mathbf{u}(t), t) \\
  \end{pmatrix}.
  \end{multlined}
  \label{eq:complex_to_coupled_real_ode}
\end{equation}
The mapping is analogous for the superoperator ODE in Eq.~\eqref{eq:superoperator_ode}. Likewise, various other metrics, e.g., fidelity $\mathcal{F}$, were analogously transformed. We made use of the real nature of the Pauli vector decomposition of $H$ to keep track of both the time-independent learnable Hamiltonian and the time-dependent control Hamiltonian representations.

We use Heun's method~\cite{numerical_analysis_heun} to implement a custom differentiable numerical ODE solver in \texttt{pytorch}~\cite{pytorch}, a popular automatic differentiation code library. The solver is able to evolve multiple ODEs under multiple generators in parallel using generalized matrix/tensor operations (ideally on a GPU to maximally leverage computational efficiency). The solver can be accessed in the \texttt{LearnableHamiltonian} module in our code~\cite{code}. To determine the optimal tradeoff between accuracy of dynamical simulation, computed gradients and the size of the computation graph that is held in memory for automatic differentiation, we conduct experiments by simulating the dynamics of random $n$-qubit Hamiltonians from $n=1$ to $n=4$ at different precision or tolerance or step size of the ODE solver (see Fig.~\ref{fig:heun_solver_tests}).

Computational speed of the solver naturally trades off with the accuracy in the simulation and the computed gradients. We find that a step size of $10^{-2}$ is sufficiently accurate for forward dynamical simulation (no gradients are computed in this step) and a step size of $5 \times 10^{-4}$ is required for the backward step when the gradients need to be computed to train the ODE model. The errors in the dynamical predictions (averaged over many thousands of data points) in both steps are reasonably small and monitored. The ODE solvers in \texttt{scipy}~\cite{scipy} and the matrix exponential method for solving linear ODEs~\cite{sophie_grape} both have similar errors than our method for the step size $5 \times 10^{-4}$ (likely the Bayes' optimal error for our numerical simulation).

The ability to be fast, but produce slightly less accurate predictions improved the wall time of our algorithm. Specifically, a significantly large number of trajectories can be quickly sampled in the forward step to augment the RL policy's training data while the much slower backward step can be limited to a smaller number of trajectories that need to be predicted and are divided over multiple batches.

\section{Bounds on the Model Prediction Error}\label{app:bounds}

Consider a unitary RL control problem with the MDP in Eq.~\eqref{eq:qc_problem_for_rl}, where the environment's Hamiltonian and propagator at some timestep $t_l$ are given by $H_{\mathcal{E}}(t_l, u_l) = H_0 + H_c(u_l, t_l)$ and  $U_{\mathcal{E}}(\mathbf{u}_{k})$. Now consider the model $\model(\s_{k+1}|\action_k, \s_k)$ that predicts a single step of unitary dynamics $\s_k \xrightarrow{H_{\pmb{\zeta}}} \s_{k+1}$ under its parametrized generator $H_{\pmb{\zeta}} = H^{(L)}_0(\pmb{\zeta}) + H_c(u_l, t_l)$ following our assumptions in Sec.~\ref{sec:mbrl_control_setup}. Now we bound the error in the single step predicted propagator $U_{\pmb{\zeta}}$ using the integration-by-parts lemma from Ref.~\cite{burgarth2022one}. We consider a continuous version of the propagators and the generators since the result is only used qualitatively.
\begin{proposition}\label{prop:model_prediction_unitary_bound}
  (Bound on the model predictions) The following bound between the unitary model's predicted state $U_{\pmb{\zeta}}(\mathbf{u}_{:k})$ and the environment's unitary state $U_{\mathcal{E}}(\mathbf{u}_k)$ holds,
  \begin{multline}
    \left\|U_{\mathcal{E}} - U_{\model}\right\|_{\infty,t}
    \leq t^2 \left\| H^{(L)}_0(\pmb{\zeta}) - H_0 \right\| \\
      \cdot\left(\frac{1}{t}+\frac{2}{t}\|H_c\|_{1,t}+\|H_{\pmb{\zeta}}\| + \|H_{\mathcal{E}}\|\right).
      \label{eq:unitary_model_prediction_error}
  \end{multline}
\end{proposition}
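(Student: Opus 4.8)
The plan is to compare the two Schr\"odinger flows directly through a variation-of-constants (Duhamel) identity and then convert the resulting integral into the stated estimate by a single integration by parts, exactly as in the lemma of Ref.~\cite{burgarth2022one}. The conceptual starting point is that the control Hamiltonian is shared by both generators, $H_{\pmb{\zeta}} = H^{(L)}_0(\pmb{\zeta}) + H_c$ and $H_{\mathcal{E}} = H_0 + H_c$, so their difference $\Delta H := H_{\pmb{\zeta}} - H_{\mathcal{E}} = H^{(L)}_0(\pmb{\zeta}) - H_0$ is time-independent with $\|\Delta H\| = \delta$. This is the single place the Hamiltonian error enters, and it is the quantity that will be factored out of every term in the bound.

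First I would write Duhamel's identity. Differentiating $V(s) := U_{\mathcal{E}}(s)^\dagger U_{\model}(s)$ and using $\dot U_{\mathcal{E}} = -iH_{\mathcal{E}}U_{\mathcal{E}}$, $\dot U_{\model} = -iH_{\pmb{\zeta}}U_{\model}$ with $U(0)=\eye$ gives $\dot V = -i\,U_{\mathcal{E}}^\dagger \Delta H\, U_{\model}$, hence
\[
  U_{\model}(t) - U_{\mathcal{E}}(t) = -i\,U_{\mathcal{E}}(t)\int_0^t U_{\mathcal{E}}(s)^\dagger\,\Delta H\,U_{\model}(s)\,ds .
\]
Because $U_{\mathcal{E}}(t)$ is unitary, $\|U_{\mathcal{E}} - U_{\model}\|_{\infty,t}$ is governed by the norm of $\mathcal{I}(t) := \int_0^t U_{\mathcal{E}}(s)^\dagger \Delta H\, U_{\model}(s)\,ds$. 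The naive estimate $\|\mathcal{I}(t)\| \le t\delta$ is immediate, but it hides the dependence on the generators that the proposition advertises; the integration by parts is what exposes that structure.

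Next I would integrate by parts, treating the static $\Delta H$ as a derivative, $\Delta H = \tfrac{d}{ds}(s\,\Delta H)$ (the specialization of the Burgarth lemma to a time-independent perturbation). Moving the derivative off the middle factor produces a boundary term $U_{\mathcal{E}}(t)^\dagger (t\Delta H) U_{\model}(t)$, of norm at most $t\delta$, together with two remainder integrals in which the derivative instead lands on $U_{\mathcal{E}}(s)^\dagger$ and on $U_{\model}(s)$, generating factors $iH_{\mathcal{E}}(s)$ and $-iH_{\pmb{\zeta}}(s)$. Bounding each remainder with $\|U\|=1$ and $\|\Delta H\|=\delta$, then pulling the generators out in sup-norm and integrating the explicit factor $s$, yields contributions of order $t^2\delta\,\|H_{\mathcal{E}}\|_{\infty,t}$ and $t^2\delta\,\|H_{\pmb{\zeta}}\|_{\infty,t}$. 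Collecting the boundary and remainder pieces and factoring $t^2\delta$ reproduces the $\tfrac1t$, $\|H_{\pmb{\zeta}}\|$ and $\|H_{\mathcal{E}}\|$ terms of the claimed inequality.

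The main obstacle is purely the bookkeeping that accounts for the $\tfrac{2}{t}\|H_c\|_{1,t}$ term, and it is worth flagging that the cleanest route above is in fact strictly tighter than the stated bound. That route gives remainder integrands $s\|H_{\mathcal{E}}(s)\|$ and $s\|H_{\pmb{\zeta}}(s)\|$, which one may either bound by sup-norms (losing the $H_c$ term but improving the constant to $\tfrac12$) or, following the split $H = H_0 + H_c$, keep the time-dependent control part inside the integral, where $\int_0^t s\,\|H_c(s)\|\,ds \le t\,\|H_c\|_{1,t}$ produces the two symmetric $\|H_c\|_{1,t}$ contributions. Reconciling which antiderivative the general lemma employs, and matching the $1/t$ prefactors against the $\|\cdot\|_{1,t}$ versus $\|\cdot\|_{\infty,t}$ conventions, is the only delicate step; the analytic content lives entirely in the one integration by parts. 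Since the result is used only qualitatively—to argue that the propagator error is linear in $\delta$ but at worst quadratic in $t$—I would not optimize constants, and would present the looser form to stay aligned with Ref.~\cite{burgarth2022one}.
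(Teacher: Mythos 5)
Your proposal is correct and takes essentially the same route as the paper: both rest on the integration-by-parts lemma of Ref.~\cite{burgarth2022one}, specialized to the time-independent generator difference $H^{(L)}_0(\pmb{\zeta}) - H_0$ (giving the $t\delta$ boundary/action term), with the triangle-inequality split $H = H_0 + H_c$ producing the $\frac{2}{t}\|H_c\|_{1,t}$ contributions. The only difference is presentational --- you unpack the lemma explicitly via Duhamel's identity and one integration by parts, whereas the paper invokes it as a black box --- and your remark that the resulting constants are slightly tighter than the stated form is accurate.
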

\begin{specialproof}
The generator difference $H_{\pmb{\zeta}} - H_{\mathcal{E}} = H^{(L)}_0(\pmb{\zeta}) - H_0$ is time-independent. So the integral action difference term becomes
\begin{align}\nonumber
  \left\|\int_0^tds~H^{(L)}_0(\pmb{\zeta}) - H_0 \right\|_{\infty, t}
  &= t \left\|H^{(L)}_0(\pmb{\zeta}) - H_0 \right\|_{\infty, t}\\
  &= t \| H^{(L)}_0(\pmb{\zeta}) - H_0 \|,
\end{align}
where in the last line, we drop the supremum over time due to time independence. Now we can rewrite
\begin{align}
  \|H_{\mathcal{E}}(\mathbf{u}(t), t)\|_{1,t}
  &= t\|H_0 + H_c(\mathbf{u}(t), t)\|_{1,t}  \nonumber\\
  &\leq t\left(\|H_0\| + \|H_c(\mathbf{u}(t), t)\|_{1,t} \right)
\end{align}
using the triangle inequality. Combining both facts yields the inequality.
\end{specialproof}

The inequality in Eq.~\eqref{eq:unitary_model_prediction_error} can be analogously extended to the open system setting w.r.t. the Choi matrix $\choi$. Here, we focus on the unitary case for simplicity since the arguments are similar.

There are two observations worth mentioning about inequality Eq.~\eqref{eq:unitary_model_prediction_error}: (a) when all other variables are fixed, the error in the model's unitary predictions w.r.t. to the environment's ground truth grows as a function of time; (b) the model prediction error is a lower bound of the error in the model parameters $H_0({\pmb{\zeta}})^{(L)}$ w.r.t. the ground truth parameters $H_0$. The prediction error $L_\text{model}(\mathcal{D}_\text{val})$ can be estimated using a validation dataset $\mathcal{D}_\text{val}$ and relates this observed validation loss to the Hamiltonian difference. Importantly, the inequality implies that the closeness in the propagator does not always translate to closeness in the Hamiltonian. Therefore, a model Hamiltonian can be locally a good fit for propagator predictions while still having a large Hamiltonian error $\left\| H^{(L)}_0(\pmb{\zeta}) - H_0 \right\|$. So arbitrary closeness in terms of the Hamiltonian error need not be necessary for good unitary predictions. But conversely, if we can be certain that the model Hamiltonian is close to the system Hamiltonian, then the unitaries must be close. This motivates that a good guess (in the form of partial knowledge about the system) of the true Hamiltonian is useful in bounding the prediction errors.

We exploit this fact to learn the local Hamiltonian $H^{(L)}_0(\pmb{\zeta})$ that approximates the dynamics of $H_0$ w.r.t.\ $U_{\mathcal{E}}$. Qualitatively, we observe that Hamiltonian error, propagator validation and training error are both improved during training (i.e., the propagator loss on the validation set is predictive of Hamiltonian error). This can be seen in Fig.~\ref{fig:datascaling_exp} for the noisy shot setting. But we also note in this example that the learned Hamiltonian $H^{(L)}_0(\pmb{\zeta})$ is local, as seen from the Hamiltonian error plateauing at a non-zero value.

\begin{figure*}[t]
  \centering
  \includegraphics[width=1\linewidth]{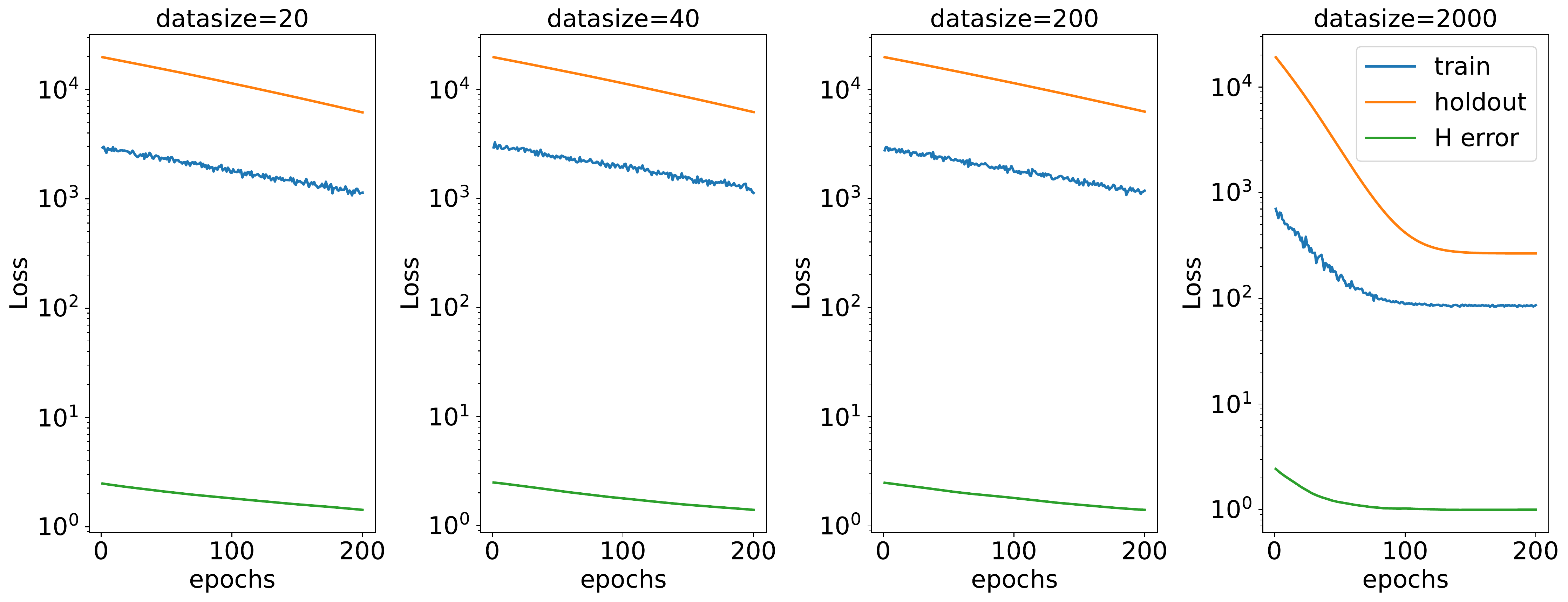}
  \caption{The Hamiltonian error, unitary training $L_\text{model}(\mathcal{D}_\text{train})$ and validation (holdout) loss $L_\text{model}(\mathcal{D}_\text{val})$ as functions of training epochs for the two-qubit transmon unitary control problem with noisy measurements and $M=10^5$. Data size denotes the number of single-step unitary transitions. The validation set is fixed to $5000$ transitions under random policy actions $\action_k$. All three error measures improve as a function of training. Adding more training data appears to provide diminishing returns in predicting the local unitary dynamics.}
  \label{fig:datascaling_exp}
\end{figure*}

\section{Monotonic Improvement for Model Returns}\label{app:monotonic_returns_improvement}

We show that it is possible to improve the environment's reward under an incorrect model ansatz in $\model$. For that we need the following result from~\cite{MBPO},
\begin{theorem}\label{thm:mbpo_performance_bound}
  (Monotonic improvement for model-based returns~\cite{MBPO})
  Given $k$-branch rollout returns $\eta_\text{branch}(\pi)$ for a policy $\pi$ under the model, the true returns $\eta(\pi)$ are lower bounded
  \begin{multline}
    \eta(\pi) \geq \eta_\text{branch}(\pi)\\
     - 2r_\text{max}\left(\frac{\gamma^{k+1}\epsilon_\pi}{(1-\gamma)^2}+\frac{\gamma^k+2}{1-\gamma}\epsilon_\pi + \frac{k}{1-\gamma}\left(\epsilon_\text{model}\right)\right)
    \label{eq:dynamic_model_returns_bound}
  \end{multline}
  where the returns $\eta$ are defined as
  \begin{align}\nonumber
    \eta(\pi) &:= \mathds{E}_{\pi}\left[\sum_{t=0}^{\infty} \gamma^t \rew_t(\s_t,\action_t) \right]\\
              & = \mathds{E}_{\rew_t \sim \mathcal{E}(\s_{t-1},\action_{t}^{\pi})}\left[\sum_{t=0}^{\infty} \gamma^t \rew_t(\s_t,\action_t) \right].
  \label{eq:returns_def}
  \end{align}
  $\rew_\text{max}$ is the maximum reward for an MDP transition; the policy error $\epsilon_\pi$ is the upper bound,
  \begin{align}
    \epsilon_{\pi} \geq D_\text{TV}\left( \pi_D(\s,\action) \| \pi(\s,\action) \right)
    \label{eq:policy_error}
  \end{align}
  where $D_\text{TV}$ is the total variation distance and $\pi_D$ is the data generating policy (i.e., the policy that generated the MDP data by interacting with the environment $\mathcal{E}$). The model error $\epsilon_\text{model}$ is the upper bound
  \begin{align}
    \epsilon_\text{model} \geq
    \max_{t} \left( \mathds{E}_{\s \sim \pi_D^{(t)}}\left[ D_\text{TV}\left( P_\mathcal{E}(\s'|\s,\action) \| P_M(\s'|\s,\action) \right)\right]\right),
  \label{eq:model_error}
  \end{align}
  where $P_M(\s'|\s,\action)$ is the MDP transition probability distribution under the model $M$ that estimates the environment $\mathcal{E}$ and likewise for $P_\mathcal{E}$. $\gamma$ is the discount factor and $k$ is the branch rollout length.
\end{theorem}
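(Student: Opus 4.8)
The plan is to reduce the whole statement to a \emph{simulation-lemma} estimate that controls the gap in discounted returns by the total-variation (TV) drift between the state--action occupancy measures of the two processes, and then to account for the branching structure by a telescoping argument. Writing $p^t_{\mathcal{E}}$ and $p^t_\text{branch}$ for the state--action marginals at step $t$ under the true process and under the branched rollout respectively, the first step is to observe that since $\eta(\pi)=\sum_{t=0}^\infty\gamma^t\mathds{E}_{p^t}[\rew]$ and $\rew\leq r_\text{max}$, the contribution of each timestep to the difference is at most $2r_\text{max}\,D_\text{TV}(p^t_{\mathcal{E}}\|p^t_\text{branch})$, giving
\begin{equation}
  \left|\eta(\pi)-\eta_\text{branch}(\pi)\right|\leq 2r_\text{max}\sum_{t=0}^\infty\gamma^t\,D_\text{TV}\!\left(p^t_{\mathcal{E}}\,\big\|\,p^t_\text{branch}\right).
\end{equation}
Everything then rests on bounding the per-step TV drift and summing the geometric weights.

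Second, I would establish the elementary recursion that composing a policy kernel with a transition kernel increases TV by at most the single-step policy and model errors: if the policies differ by at most $\epsilon_\pi$ and the transition kernels by at most $\epsilon_\text{model}$ in TV (Eqs.~\eqref{eq:policy_error} and~\eqref{eq:model_error}), then $D_\text{TV}(p^{t+1}_1\|p^{t+1}_2)\leq D_\text{TV}(p^t_1\|p^t_2)+\epsilon_\pi+\epsilon_\text{model}$. Iterating this yields the \emph{linear} accumulation $D_\text{TV}(p^t_1\|p^t_2)\leq t(\epsilon_\pi+\epsilon_\text{model})$ characteristic of compounding model error; it is precisely this linear-in-$t$ factor that, after weighting by $\gamma^t$ and summing via $\sum_t t\gamma^t=\gamma/(1-\gamma)^2$, produces the $1/(1-\gamma)^2$ denominator, whereas a plain summed drift produces only $1/(1-\gamma)$.

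Third, and this is where the branch length $k$ enters, I would split the horizon into the $k$ model steps and the discounted tail. During the branch the drift is driven by both the model error and the policy error and accumulates linearly, yielding the $\tfrac{k}{1-\gamma}\epsilon_\text{model}$ term; past the branch point the rollout has already departed from $\pi$'s true occupancy, so the only remaining source is the policy mismatch, which compounds over the infinite tail discounted by $\gamma^{k+1}$ to give the $\tfrac{\gamma^{k+1}\epsilon_\pi}{(1-\gamma)^2}$ term. The boundary contributions at the start of the rollout (where the branched scheme samples the initial state from $\pi_D$ rather than $\pi$) and at the branch boundary collect into the $\tfrac{\gamma^k+2}{1-\gamma}\epsilon_\pi$ term. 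Keeping $\eta(\pi)$ on the left and discarding the absolute value gives the one-sided lower bound, which is exactly the claim.

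I expect the main obstacle to be the bookkeeping of this third step rather than any single hard inequality: the delicate part is the telescoping decomposition that cleanly separates the three error sources and assigns each the correct power of $\gamma$ ($\gamma^{k+1}$ for the compounded tail, $\gamma^k$ at the branch boundary) and the correct denominator. A secondary subtlety is propagating the initial $\epsilon_\pi$ offset coming from starting at $\pi_D$'s state distribution. Since the result is invoked only qualitatively here, I would otherwise follow the construction of Ref.~\cite{MBPO} verbatim.
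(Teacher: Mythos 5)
Your sketch is essentially the same approach as the source: the paper does not prove this theorem itself --- its ``proof'' is a one-line deferral to Theorem~4.3 of Ref.~\cite{MBPO} --- and what you outline is a faithful reconstruction of the argument given there, namely the simulation-lemma reduction of the return gap to a discounted sum of occupancy-measure total-variation distances, the linear per-step accumulation of $\epsilon_\pi$ and $\epsilon_\text{model}$ that produces the $1/(1-\gamma)^2$ factors, and the decomposition of the horizon at the branch point. The only caveat is that your bookkeeping of which segment contributes the $\gamma^{k+1}\epsilon_\pi/(1-\gamma)^2$ term is oriented backwards relative to Ref.~\cite{MBPO} --- there the branch is the \emph{final} $k$ steps of a trajectory begun under the data-collecting policy, so the compounded policy error comes from the pre-branch prefix (hence the $\gamma^{k+1}$ discount), not from an infinite tail after the branch --- but this is exactly the bookkeeping subtlety you flag yourself and does not change the method.
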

\begin{specialproof}
  See proof of Theorem~4.3 in~\cite{MBPO}.
\end{specialproof}

Informally, the theorem states that as long as the returns under the model $\eta_\text{branch}$ are improved by at least the tolerance term $2r_\text{max}(\cdots)$, then the returns under the environment $\eta$ are guaranteed to improve. This also assumes that the policy $\pi$ generating the model returns is reasonably close to the policy that interacts with the environment to generate the MDP data that we use to compute the statistics including the returns. This policy error $\epsilon_\pi$ can be monitored online and controlled while running the algorithm by curtailing its training once it exceeds some tolerance threshold. Moreover, Ref.~\cite{MBPO} shows that as long as the dataset size is large enough, the model error $\epsilon_m$ can be decoupled from the policy error $\epsilon_\pi$. The optimal branch rollout length $k^*$ is given by the minimizer of the tolerance. In practice, there are other considerations (e.g., the interplay between various hyperparameters) that need to be accounted for to determine $k^*$, so it is usually tuned numerically.

Using Thm.~\ref{thm:mbpo_performance_bound} for the ODE model, we can indirectly connect the Hamiltonian error using the validation loss $L_\text{model}(\mathcal{D}_\text{val})$) with $\epsilon_\text{model}$. If the Hamiltonian error is small, then $\epsilon_\text{model}$ is small and the returns from the model and the environment are similar for any interacting policy $\pi_{\nntheta}$. However, the returns need not be exactly the same and just need to be better than the tolerance provided by the term $-2r_\text{max}(\cdots)$ in Eq.~\eqref{eq:dynamic_model_returns_bound} which is a function of $\epsilon_\text{model}$. The tolerance is smaller for a more accurate model and so less of an improvement of the model returns $\eta_\text{branch}$ is necessary. The following lemma makes this idea concrete by applying Thm.~\ref{thm:mbpo_performance_bound} to our RL control problem setup.

\begin{lemma}\label{lemma:model_error_ode}
(Model error upper bound for the ODE model)
If the model error $\epsilon_\text{model}$ upper bounds the risk,
\begin{equation}
    \epsilon_\text{model} \geq \max_t\left(\mathds{E}_{\s \sim \pi_D^{(t)}}\left[ \mathds{I}(\model(\s, \action) \neq \mathcal{E}(\s, \action))\right]\right)
\end{equation}
then it also upper bounds the unitary prediction error
\begin{align}
  \epsilon_\text{model} &\geq
  \max_{t} \left( \mathds{E}_{\s \sim \pi_D^{(t)}}\left[\left\|U_{\mathcal{E}(\s,\action)} - U_{\model(\s,\action)}\right\|_{\infty,t}\right]\right)
    \label{eq:model_error_in_terms_of_unitary_loss}
\end{align}
and the total variation distance between the model and environment probabilistic distributions,
\begin{multline}
  \epsilon_\text{model} \\\geq
    \max_{t} \left( \mathds{E}_{\s \sim \pi_D^{(t)}}\left[ D_\text{TV}\left( P_\mathcal{E}(\s'|\s,\action) \| P_{\model}(\s'|\s,\action) \right)\right]\right).
\end{multline}
\end{lemma}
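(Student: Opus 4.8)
The plan is to reduce both inequalities to the hypothesized risk bound by exploiting the fact, established below Eq.~\eqref{eq:qc_problem_for_rl}, that the control MDP is deterministic. Consequently both transition kernels are point masses, $P_{\mathcal{E}}(\cdot|\s,\action)=\delta_{\mathcal{E}(\s,\action)}$ and $P_{\model}(\cdot|\s,\action)=\delta_{\model(\s,\action)}$, and the heart of the argument is that each target quantity is pointwise dominated by the disagreement indicator $\mathds{I}(\model(\s,\action)\neq\mathcal{E}(\s,\action))$. I would dispatch the total-variation claim first, since it is the cleanest: the TV distance between two point masses is $0$ when they coincide and $1$ otherwise, so $D_\text{TV}(P_{\mathcal{E}}(\cdot|\s,\action)\|P_{\model}(\cdot|\s,\action))=\mathds{I}(\model(\s,\action)\neq\mathcal{E}(\s,\action))$ holds pointwise with equality. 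Taking $\mathds{E}_{\s\sim\pi_D^{(t)}}[\cdot]$ and then $\max_t$ of both sides shows the expected TV distance equals the risk, which by hypothesis is at most $\epsilon_\text{model}$, establishing the third inequality.

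For the unitary-prediction claim I would use that, in this MDP, the state is the propagator itself, so the one-step successor states are precisely $U_{\mathcal{E}(\s,\action)}$ and $U_{\model(\s,\action)}$. Hence agreement of the transitions, $\model(\s,\action)=\mathcal{E}(\s,\action)$, is equivalent to $U_{\model}=U_{\mathcal{E}}$, i.e.\ to a vanishing prediction error, so the indicator being $0$ forces the left-hand side to be $0$. On the disagreement event the two successors are unitaries, so the triangle inequality gives $\|U_{\mathcal{E}}-U_{\model}\|\leq\|U_{\mathcal{E}}\|+\|U_{\model}\|=2$, and the same bound controls $\|U_{\mathcal{E}}-U_{\model}\|_{\infty,t}$ over the single prediction step. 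Combining the two cases yields the pointwise domination $\|U_{\mathcal{E}}-U_{\model}\|_{\infty,t}\leq c\,\mathds{I}(\model(\s,\action)\neq\mathcal{E}(\s,\action))$ for a fixed constant $c$; applying $\mathds{E}_{\s\sim\pi_D^{(t)}}[\cdot]$ and $\max_t$ bounds the expected unitary error by $c$ times the risk, hence by $\epsilon_\text{model}$ after the constant is absorbed, which gives Eq.~\eqref{eq:model_error_in_terms_of_unitary_loss}.

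The main obstacle I anticipate is precisely this constant $c$ in the unitary step: the disagreement indicator equals $1$, whereas the raw operator-norm gap between two distinct unitaries can be as large as $2$, so the clean pointwise inequality only holds once the unitaries are normalized or the factor $c$ is folded into the definition of $\epsilon_\text{model}$. I would handle this by treating $\epsilon_\text{model}$ as specified up to this fixed multiplicative constant, which is harmless because the lemma serves only to feed the model error into Thm.~\ref{thm:mbpo_performance_bound}, where $\epsilon_\text{model}$ already appears multiplied by problem-dependent factors, and because the bound is used only qualitatively, as flagged in App.~\ref{app:bounds}. The $\|\cdot\|_{\infty,t}$ norm introduces no further difficulty here, since the successor state is produced in a single time increment so the supremum over $[0,t]$ is taken over one prediction step.
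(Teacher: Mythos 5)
Your proposal is correct and follows essentially the same route as the paper's proof: use determinism to replace the transition kernels by point masses so that the total-variation distance is dominated by (in fact equals) the disagreement indicator, and bound the unitary prediction error pointwise by that indicator after a normalization. Your explicit treatment of the constant $c=2$ makes precise the paper's parenthetical caveat that $\left\|U_{\mathcal{E}} - U_{\model}\right\|_{\infty,t}$ must be normalised to lie in $[0,1]$, but this is a refinement of the same argument rather than a different one.
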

\begin{specialproof}
  Since the model $\model$ and the environment are both deterministic by assumption, we need to modify the lower bound on the model error $\epsilon_\text{model}$ in Thm.~\ref{thm:mbpo_performance_bound}. We can replace the total variation distance between the two supposed distributions $P_{\mathcal{E}}, P_{\model}$ by an indicator variable $\mathds{I}(\model(\s, \action) \neq \mathcal{E}(\s, \action))$ if $\s'_{\model} \neq \s'_{\mathcal{E}}$, which is $1$ if the transitioned states do not match and $0$ if they do. We can upper bound the total variation distance like this since $D_\text{TV}(P_{\mathcal{E}}, P_{\model}) = \sup_{A}{| P_{\mathcal{E}}(A) - P_{\model}(A)|} \leq 1$ in case the probabilities do not match and $D_\text{TV}(P_{\mathcal{E}}, P_{\model}) = 0$ when they match perfectly. Hence, there exists some $\epsilon_\text{model}$ such that
  \begin{multline}
    \epsilon_\text{model} \geq
    \max_{t} \left( \mathds{E}_{\s \sim \pi_D^{(t)}}\left[ \mathds{I}(\model(\s, \action) \neq \mathcal{E}(\s, \action))\right]\right) \\\nonumber
    \geq \max_{t} \left( \mathds{E}_{\s \sim \pi_D^{(t)}}\left[ D_\text{TV}\left( P_\mathcal{E}(\s'|\s,\action) \| P_M(\s'|\s,\action) \right)\right]\right).
  \end{multline}
  The risk $ \mathds{E}_{\s \sim \pi_D^{(t)}}\left[ \mathds{I}(\model(\s, \action) \neq \mathcal{E}(\s, \action))\right]$ is essentially the fraction of unitaries that the model predicts incorrectly and is related to the unitary error in Prop.~\ref{prop:model_prediction_unitary_bound} by the fact that
  \begin{align}
    \left\|U_{\mathcal{E}} - U_{\model}\right\|_{\infty,t} \leq \mathds{I}(\model(\s, \action) \neq \mathcal{E}(\s, \action)),
  \end{align}
  provided that $\left\|U_{\mathcal{E}} - U_{\model}\right\|_{\infty,t}$ is normalised to be in $[0,1]$. So we have
  \begin{multline}
    \mathds{E}_{\s \sim \pi_D^{(t)}}\left[\left\|U_{\mathcal{E}(\s,\action)} - U_{\model(\s,\action)}\right\|_{\infty,t}\right] \\
    \leq \mathds{E}_{\s \sim \pi_D^{(t)}}\left[\mathds{I}(\model(\s, \action) \neq \mathcal{E}(\s, \action))\right].
  \end{multline}
  So $\epsilon_\text{model}$ upper bounds the expected unitary error if and only if $\epsilon_\text{model}$ upper bounds the expected risk in the unitary prediction error.
\end{specialproof}

\section{How Much Data is Needed for Model Training?}\label{app:how_much_data}

\begin{figure}
  \centering
  \includegraphics[width=1\linewidth]{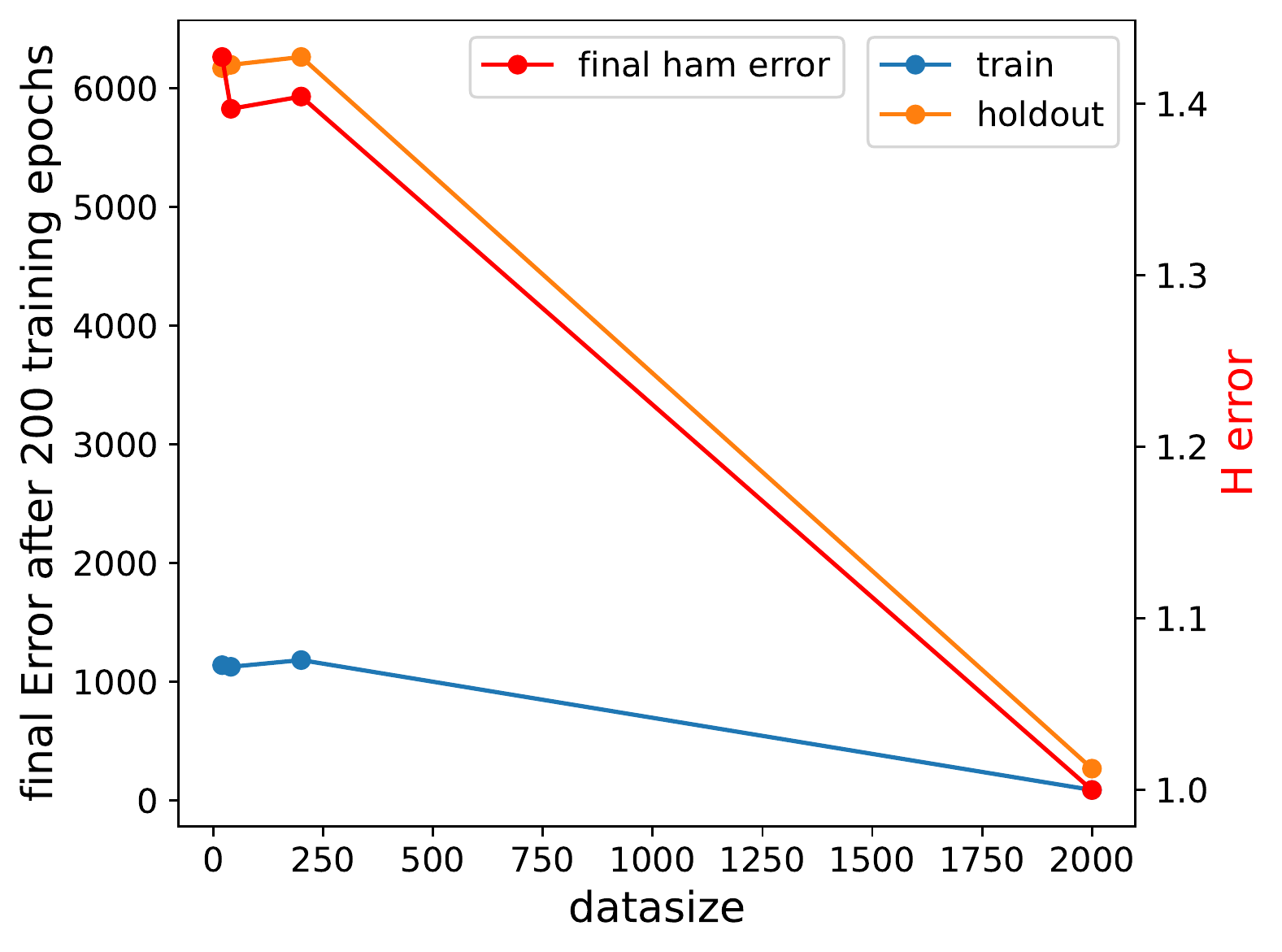}
  \caption{Effect of training data size on model generalization metrics: Hamiltonian error, unitary training $L_\text{model}(\mathcal{D}_\text{train})$ and validation (holdout) loss $L_\text{model}(\mathcal{D}_\text{val})$ for noisy single shot measurement-based unitary control of the transmon.}
  \label{fig:final_errors_datascaling_exp}
\end{figure}

A hallmark for a good ansatz for the model $\model$ estimating the dynamics of the controllable system would be less demand of supervised learning MDP data needed for low prediction error.

We consider the Hamiltonian error, unitary train and holdout error. Hamiltonian error $\delta$ is the spectral norm error between the learned and true system Hamiltonian. The others are mean squared errors. Cross-validation is used to estimate the model's generalization ability on a holdout dataset of unseen random unitary data, also sampled from the MDP transitions and collected by the policy $\pi$ during training.

As seen from Fig.~\ref{fig:datascaling_exp}, for the two-qubit transmon control problem, for very small dataset sizes comprising $20-200$ unitary transitions, the single step unitary prediction error is large compared to training with about $2,000$ unitaries or about $100$ full length pulses with $20$ timesteps, though the decrease in error is diminishing with dataset size. All errors are in agreement across the datasets over $200$ training epochs. This is further corroborated by Fig.~\ref{fig:final_errors_datascaling_exp} where the final errors after $200$ epochs are plotted. There is a reduction in the final errors for the $2000$ dataset size, but the improvement is diminishing in magnitude and plateaus at this loss for larger dataset sizes. This is still much less than what was required to train a neural network model for $\model$ during the initial stages of our research where the training dataset size needed to be of the order of $10^6$. Moreover, these experiments provide us with an idea of what dataset size to use to train the model $\model$ by setting the number of initial exploration MDP transitions to add to the policy's buffer for the transmon control problem. We also adopted multiple training phases to continuously train $\model$ using fresh batches of training data collected by the policy.

\section{{Leveraging the Learned Hamiltonian for the Two-qubit NV Center}}\label{app:graped_sac_nv}

{Similar to the results found in Sec.~\ref{ssec:graped_sac}, here we report the structural differences between the learned and target Hamiltonians for the two-qubit NV center.}

{The matrix difference between the true $H_0$ and learned Hamiltonian $H_0(\pmb{\zeta})$ is,
\begin{widetext}
\begin{equation*}
    H-H_0(\pmb{\zeta}) = \left[\begin{matrix}
                    0.0116 & 0.0013i & -0.0001-0.0002i & -0.0007 \\
                        -0.0013i & -0.0111 & -0.0001+0.0002i & 0.0003+0.0003i \\
                        -0.0001+0.0002i & 0.0001+0.0002i & -0.0108 & -0.0005-0.0002i \\
                        -0.0007 & 0.0003-0.0003i & -0.0005+0.0002i  & -0.013 \\
                \end{matrix}\right]
\end{equation*}
\end{widetext}
Moreover, the non-linear relationship between the model prediction errors and the spectral norm error $\delta$ or the mean squared Pauli expectation value error is confirmed as before in Fig.~\ref{fig:local_hams_and_graped_conts_nv}(a). Local and global trajectory differences under a random control pulse and the results of using GRAPE on RL controllers are shown in Fig.~\ref{fig:local_hams_and_graped_conts_nv}(b) and (c) respectively. The learned Hamiltonian is able to improve the controller fidelities to greater than $0.999$.}

\begin{figure*}
  \centering
  \includegraphics[width=2.\columnwidth]{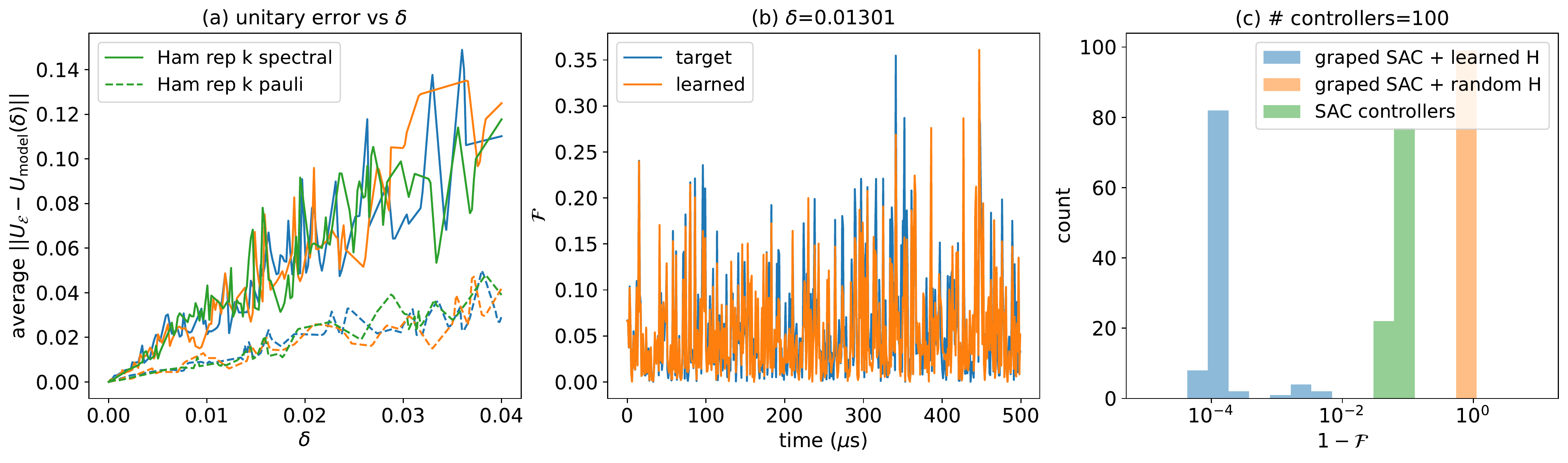}
  \caption{(a) The non-linear relationship between the prediction error $\left\|U_{\mathcal{E}} - U_{\model}\right\|$ and Hamiltonian spectral norm error or mean squared Pauli expectation value error $\delta$ for the two-qubit NV center Hamiltonian. For the same $1000$ random control pulses, we evaluate the average unitary prediction error of $\model$ with increasing $\delta$ for three different uniform randomly sampled two-qubit Hamiltonians $H_0(\pmb{\zeta})$. (b) Local and global unitary trajectories: $\mathcal{F}$ as a function of a random control pulse with either the learned $H_0(\pmb{\zeta})$ or true $H_0$. The learned trajectories and global trajectory overlap less with increasing time with the spectral norm error of $\delta=0.01301$ and a global phase factor $\Tr[ H-H_0(\pmb{\zeta})]$ of $\sim0.01$. (c) The learned $H_0(\pmb{\zeta})$ can be leveraged using GRAPE to further optimize the fidelities of LH-MBSAC's controllers. Repeating the procedure in Sec.~\ref{ssec:graped_sac}, yields fidelities of greater than $0.999$.}
  \label{fig:local_hams_and_graped_conts_nv}
\end{figure*}

{
\section{Three-qubit transmon Control Problem}\label{app:three-qubit-transmon}

In this section we discuss the issue of scalability of LH-MBSAC's performance related to the three-qubit transmon control problem in Sec.~\ref{ssec:lims_and_silvers} in detail.

Working with two level systems, we extend the two-qubit transmon Hamiltonian to its three-qubit version $H_{\text{tra}}^{(3)}$. The system part generalizes trivially. For the control part $H_{\text{tra}_c}^{(3)}$,
we generalize the cross resonance interaction presented in Ref.~\cite{threequbitmodel} to construct the following time-dependent part of the three-qubit transmon Hamiltonian,
\begin{multline}\label{eq:three-qubit-hamiltonian}
  \frac{H_{\text{tra}_c}^{(3}(t)}{\hbar} = \sum_{l=1}^3 \Big(a_l(t)(Z_{l} X_{l+1} + X_{l+1} + Y_{l+1} + Z_{l})\\
  + b_l(t)(X_lZ_{l+1} + X_{l} + Y_{l} + Z_{l+1})\Big)
\end{multline}
where $a_l(t), b_l(t)$ are the real drive amplitudes and $X_l, Y_l, Z_l$ are the corresponding Pauli operators on the $l$th qubit.

To start, we mention our hyperparameter strategy. Only an initial hyperparameter search is performed for the two-qubit transmon control problem, and we were successfully able to transfer the same hyperparameters to all problems in the paper that were studied including the ones presented in Fig.~\ref{fig:sample_complexity_results}.

It is a desirable property for the stabiltiy of RL algorithms to be robust to hyperparameter changes for different target problems, which we found to be the case. The search was only conducted for the model-free SAC since LH-MBSAC is just a model-based augmentation of the underlying SAC algorithm so there is no strong reason for the hyperparameters to fail to transfer.

However, for the three-qubit transmon control problem, we encountered issues and had to repeat the search. This was extensive, and what we focused on are: more initial exploration data, using bigger layer sizes for the policy and value function neural networks, changing the learning and update rates for the policy and value functions, amongst other things. An extremely thorough search is difficult since the problem is more computationally challenging, and it is hard to determine when to terminate the training during a trial run that necessarily needs to be premature during the hyperparameter search. Please see the accompanying code for the list of hyperparameters we searched over using Bayesian optimization in \texttt{tune\_hypers.py} along with some results in the \texttt{hyper\_tests} folder.

\begin{figure}
  \centering
  \includegraphics[width=0.98\linewidth]{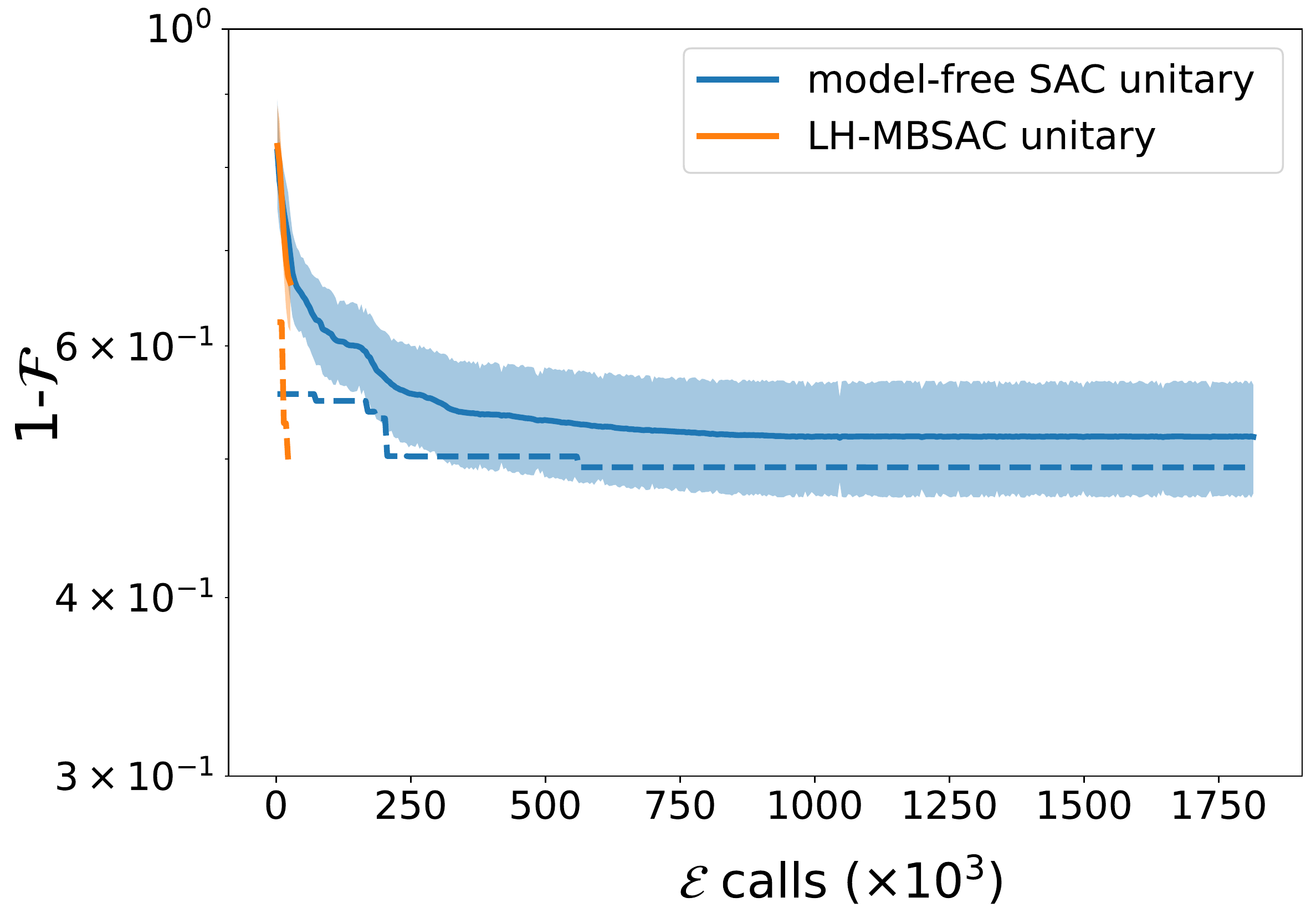}
  \caption{Noiseless unitary sample complexity for the three-qubit transmon where the target gate is the Toffoli gate. Since LH-MBSAC is based on SAC, the latter's training curves are obtained first to see if it viably solves the problem, and it was trained for much longer i.e.
  in the order of millions of samples as seen in Fig.~\ref{fig:three-qubit-transmon-sample-complexity}. Mean (solid) and maximum fidelities (dashed) saturate as the policy and value functions gradients and outputs saturate due to the agent getting stuck in a suboptimal extremum of the optimization landscape.}
  \label{fig:three-qubit-transmon-sample-complexity}
\end{figure}

Furthermore, we make observations that make this issue seem less like a hyperparameter issue and more like an optimization landscape problem:
\begin{enumerate}
\item The values and the gradients for policy and value functions that saturate are both stuck in suboptimal extrema and ultimately we get stuck at a prematurely optimized reward function. This is illustrated in Fig.~\ref{fig:three-qubit-transmon-sample-complexity}. Essentially, SAC gets stuck in a loop mining medium level fidelities and its policy outputs saturate on the extremes of the control amplitudes. It is already detailed in Sec.~\ref{ssec:lims_and_silvers} that RL pulses are biased towards maintaining high intermediate fidelities due to the nature of the MDP used in the paper. Fig.~\ref{fig:rl_vs_grapecontrollers} example pulses found by RL vs.\ GRAPE for the two-qubit transmon, confirming this.
\item Since we have the model Hamiltonian, we insert it into GRAPE initialized with the highest fidelity SAC controller values, and it also gets stuck (at slightly better fidelities).
\end{enumerate}
Despite these issues, the system Hamiltonian is still learned. It can be inserted into GRAPE with uniform random initialization of control pulse parameters to achieve fidelities of over $0.999$.
}

\section{Comparison of Fidelities for Lindbladian Dynamics}\label{app:fid_metrics_comparison}

We study the agreement between three different fidelity measures of realized noisy gates on open systems with Lindblad decay and decoherence for the two-qubit transmon gate control problem. The fidelity measures are the diamond norm fidelity~\cite{diamondnorm}, the generalized state fidelity~\cite{flammia_direct_fidelity_2011}, and the average gate fidelity~\cite{Uhlmann}. The diamond norm fidelity, derived from the diamond norm or the completely bounded trace norm, is the most expensive to compute as it involves solving a convex optimization problem:
\begin{multline}
  \mathcal{F}_\diamond(\choi(\mathbf{u}(t), t), \choi_\text{target})
  = 1-\|\choi(\mathbf{u}(t), t)- \choi_\text{target}\|_\diamond \\
  = 1-\max_{\rho}\left\|\choi(\mathbf{u}(t), t) \circ \rho - \choi_\text{target} \circ \rho \right\|_1,
  \label{eq:diamond_norm_under_the_hood}
\end{multline}
where the maximization is over the space of all density matrices $\rho$. This can be done by solving an equivalent semi-definite program~\cite{watrous_diamond_norm_compute}. $0.5 \leq \mathcal{F}_\diamond(\choi(\mathbf{u}(t), t) \leq 1$.

To study the sensitivities of the measures to dissipation and their agreement w.r.t. each other, we consider low, medium and high dissipation regimes. We evaluate $100$ of our controllers found for the noisy single shot measurements setting of the two-qubit transmon in these regimes. The results are plotted in Fig.~\ref{fig:fid_comparison}. Here, \texttt{deca} and \texttt{deco} refer to inverse decoherence and decay rates $2/T^*_l, 2/T_l$ respectively, for the $l$th qubit, measured in MHz. We re-normalize the trace of the realized operator $\choi(\mathbf{u}(t), t)$ during our experiments, as is standard practice. Due to the exhaustive nature of its computation, $\mathcal{F}_\diamond$ is the most sensitive to noise and loss of coherence out of all the measures. The generalized state fidelity is the least sensitive and the average gate fidelity falls in the middle. For very low to medium dissipation levels, e.g., $(0.05, 0.05)$, $(0.05, 0.1)$, or $(0.05, 0.2)$ for the pair $(\texttt{deca}, \texttt{deco})$, the generalized state fidelity is near perfect while the gate and diamond norm fidelities are more sensitive and closer to $0.9$. For this reason, in Sec.~\ref{ssec:open_sys_results}, we chose to use the diamond norm fidelity to more accurately gauge controller performance---this was especially true for the low dissipation regime results.

As a side note, some controllers shown in Fig.~\ref{fig:fid_comparison} are more robust to dissipation than others as revealed by the noisy variation across the controller index vs. fidelity plot. The controllers are not ordered, so the fidelity in the zero dissipation regime has some noise/variation as seen for $\texttt{deca}, \texttt{deco} = (0.05, 0.05)$. Across all the subfigures, the robustness is captured by all the fidelity measures where the variation magnitudes and positions are more or less aligned.

\begin{figure*}
  \centering
  \includegraphics[width=0.98\linewidth]{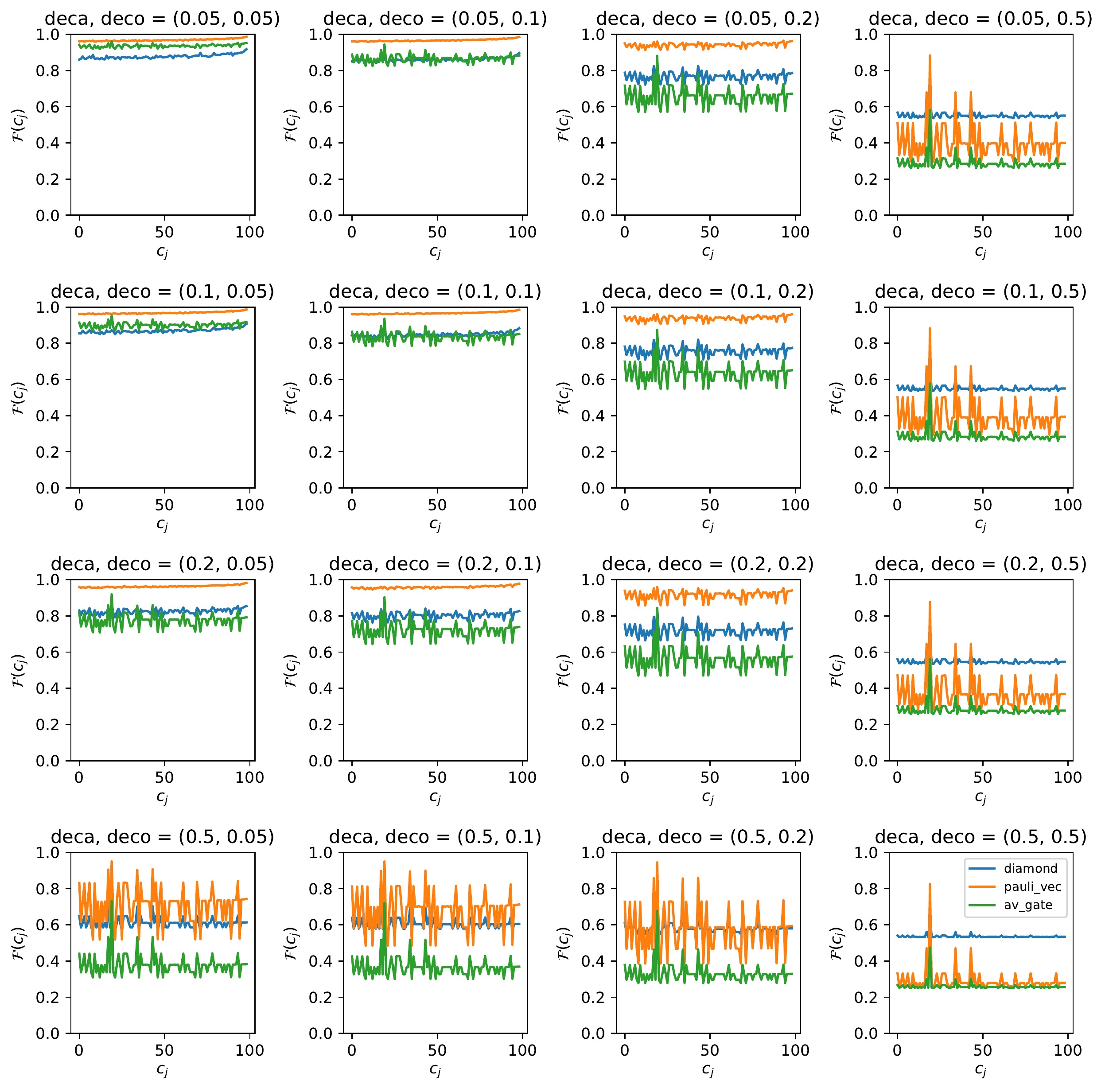}
  \caption{How much the fidelity measures relate to one another as the dissipation strength varies in terms of the decoherence and the decay coefficients in Eq.~\eqref{eq:Lindblad_operators} for the Lindbladian $l_d$ operators. Here, \texttt{deca}, \texttt{deco} refer to inverse decay and decoherence rates $2/T^*_l, 2/T_l$ respectively, for the $l$th qubit measured in MHz. The x-axis refers to a controller $c_j$ obtained for the two-qubit transmon gate control problem with single shot measurement noise where the target is the CNOT gate. The controllers are in random order w.r.t. the fidelity, but the ordering is preserved across each subfigure. The number of single shot measurements is $10^6$ and \texttt{diamond}, \texttt{pauli\_vec}, \texttt{av\_gate} refer to the diamond norm fidelity~\cite{diamondnorm}, the generalised state fidelity~\cite{flammia_direct_fidelity_2011} and the average gate fidelity~\cite{Uhlmann}.
  }
  \label{fig:fid_comparison}
\end{figure*}

\end{document}